\documentclass[12pt]{article}

\usepackage{amssymb}
\usepackage{amsmath}
\usepackage{fullpage}
\usepackage{graphicx}

\newtheorem{theorem}{Theorem}
\newtheorem{lemma}{Lemma}
\newtheorem{remark}{Remark}
\newtheorem{observation}{Observation}

\newcommand{\qed}{\rule{0.5em}{1.5ex}}
\newcommand{\fqed}{{\hfill~\qed}}
\newenvironment{proof}{{\noindent \bf Proof.}}
                      {{\hfill \fqed} \vspace{1em}}

\newcommand{\lca}{\mathsf{LCA}}
\newcommand{\level}{\mathord{\it level}}
\newcommand{\dst}{\mathsf{dist}}
\newcommand{\ps}{\mathsf{PS}}
\newcommand{\dd}{\mathsf{D}}
\newcommand{\bsd}{\mathsf{BD}}
\newcommand{\spp}{\mathsf{SP}}
\newcommand{\cw}{\mathsf{CW}}
\newcommand{\ccw}{\mathsf{CCW}}

\title{Shortest Beer Path Queries in Outerplanar 
Graphs\thanks{A preliminary version was presented at the 32nd Annual 
International Symposium on Algorithms and Computation (ISAAC 2021).
JB was supported by an NSERC Undergraduate Student Research Award. 
MS was suported by NSERC.}} 
\author{
Joyce Bacic\thanks{School of Computer Science, Carleton University, 
Ottawa, Canada.} 
\and
Saeed Mehrabi\thanks{UMass Lowell, USA}
\and
Michiel Smid\footnotemark[2]
}

\begin{document}

\maketitle

\begin{abstract}
A \emph{beer graph} is an undirected graph $G$, in which each edge has
a positive weight and some vertices have a beer store. 
A \emph{beer path} between two vertices $u$ and $v$ in $G$ is any path
in $G$ between $u$ and $v$ that visits at least one beer store.

We show that any outerplanar beer graph $G$ with $n$ vertices can be
preprocessed in $O(n)$ time into a data structure of size $O(n)$, such
that for any two query vertices $u$ and $v$, (i) the weight of the
shortest beer path between $u$ and $v$ can be reported in $O(\alpha(n))$
time (where $\alpha(n)$ is the inverse Ackermann function), and
(ii) the shortest beer path between $u$ and $v$ can be reported in
$O(L)$ time, where $L$ is the number of vertices on this path.
Both results are optimal, even when $G$ is a beer tree
(i.e., a beer graph whose underlying graph is a tree).
\end{abstract}

\section{Introduction} 
Imagine that you are going to visit a friend and, not wanting to show 
up empty handed, you decide to pick up some beer along the way. In this 
paper we determine the fastest way to go from your place to your 
friend's place while stopping at a beer store to buy some drinks.

A \emph{beer graph} is a undirected graph $G=(V,E)$, in which each edge 
$(u,v)$ has a positive weight $\omega(u,v)$ and some of the vertices are 
beer stores. For two vertices $u$ and $v$ of $G$, we define the 
\emph{shortest beer path} from $u$ to $v$ to be the shortest 
(potentially non-simple) path that starts at $u$, ends at $v$, and 
visits at least one beer store. We denote this shortest path by 
$\spp_B(u,v)$. The \emph{beer distance} $\dst_B(u,v)$ between $u$ 
and $v$ is the weight of the path $\spp_B(u,v)$, i.e., the sum of the 
edge weights on $\spp_B(u,v)$. 

Observe that even though the shortest beer path from $u$ to $v$ may be 
a non-simple path, it is always composed of two simple paths: the 
shortest path from $u$ to a beer store and the shortest path from this 
same beer store to $v$. Thus, when looking at the shortest beer path 
problem, we often need to consider the shortest path between vertices. 
We denote the shortest path in $G$ from $u$ to $v$ by $\spp(u,v)$ and 
we use $\dst(u,v)$ to denote the weight of this path. We also say that 
$\dst(u,v)$ is the \emph{distance} between $u$ and $v$ in $G$. 

To the best of our knowledge, the problem of computing shortest beer 
paths has not been considered before. Let $s$ be a fixed source vertex
of $G$. Recall that Dijkstra's algorithm computes $\dst(s,v)$ for all
vertices $v$, by maintaining a ``tentative distance'' $\delta(v)$, 
which is the weight of the shortest path from $s$ to $v$ computed so 
far. If we also maintain a ``tentative beer distance'' $\delta_B(v)$
(which is the weight of the shortest beer path from $s$ to $v$ that 
has been found so far), then a modification of Dijkstra's algorithm 
allows us to compute $\dst_B(s,v)$ for all vertices $v$, in 
$O(|V| \log |V| + |E|)$ total time. 

As far as we know, no non-trivial results are known for beer distance 
queries. In this case, we want to preprocess the beer graph $G$ into 
a data structure, such that, for any two query vertices $u$ and $v$, 
the shortest beer path $\spp_B(u,v)$, or its weight $\dst_B(u,v)$, 
can be reported.

\subsection{Our Results}
\label{secOR}
We present data structures that can answer shortest beer path queries in 
outerplanar beer graphs. Recall that a graph $G$ is \emph{outerplanar}, 
if $G$ can be embedded in the plane, such that all vertices are on the 
outer face, and no two edges cross. 

Our first result is stated in terms of the inverse Ackermann function. 
We use the definition as given in~\cite{timothy}: Let $A_0(i) = i+1$ 
and, for $\ell \geq 0$, $A_{\ell+1}(i) = A_{\ell}^{(i+1)}(i+8)$, where 
$A_{\ell}^{(i+1)}$ is the function $A_{\ell}$ iterated $i+1$ times.
We define $\alpha(m,n)$ to be the smallest value of $\ell$ for which
$A_{\ell}(\lfloor m/n \rfloor) > n$, and we define 
$\alpha(n) = \alpha(n,n)$. 

\begin{theorem}  \label{thmdq}  
Let $G$ be an outerplanar beer graph with $n$ vertices.  
For any integer $m \geq n$, we can preprocess $G$ in $O(m)$ time 
into a data structure of size $O(m)$, such that for any two query 
vertices $u$ and $v$, both $\dst(u,v)$ and $\dst_B(u,v)$ can be 
computed in $O(\alpha(m,n))$ time.   
\end{theorem}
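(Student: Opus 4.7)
The plan is to reduce both distance and beer-distance queries to a single tree-path product problem on a constant-size semigroup, and then to invoke a known data structure for such products.

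First, I would triangulate $G$ into a maximal outerplanar graph by adding dummy chords of weight $+\infty$ (so they never lie on a finite shortest path), and compute its weak dual tree $T$, whose nodes are the triangular inner faces and whose edges join triangles sharing a chord. Both steps take $O(n)$ time. Since every simple $u$-to-$v$ path is classified by the ordered sequence of chords it crosses, and this sequence corresponds exactly to the $T$-path between a triangle containing $u$ and a triangle containing $v$, the tree $T$ fully exposes the combinatorics of shortest paths.

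Next, I would attach to each triangle $\Delta$ a constant-size ``transition object.'' For plain distance this is a $2\times 2$ min-plus matrix $M_\Delta$ whose $(i,j)$-entry is the length of the shortest walk inside $\Delta$ from the $i$th endpoint of the entry chord to the $j$th endpoint of the exit chord. These matrices compose associatively under min-plus multiplication, so $\dst(u,v)$ is a min-plus product along the $T$-path from $u$'s starting triangle to $v$'s ending triangle, pre- and post-composed with short ``first-chord'' and ``last-chord'' distance vectors at $u$ and $v$. For beer distance I would enlarge the matrix to track, for every pair of entry/exit endpoints, both the best chord-to-chord walk and the best chord-to-chord walk that has already visited a beer store; the resulting constant-size matrices still form an associative semigroup under the natural componentwise min-plus composition, and the beer distance is then read off the ``beer coordinate'' of the final product.

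With the problem recast as tree-path products in an associative semigroup of constant-size elements, I would invoke the classical result that, after $O(m)$-time and $O(m)$-space preprocessing, any tree-path product can be reported in $O(\alpha(m,n))$ semigroup operations. Because each operation takes $O(1)$ time in our setting, this yields the stated query bound. An $O(n)$-preprocessing $O(1)$-query LCA structure on $T$ splits the tree path into its two monotone halves, and a small lookup table giving each vertex of $G$ one triangle in each ``direction'' handles entry/exit bookkeeping.

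The main obstacle will be the boundary handling at $u$ and $v$. A vertex of $G$ can lie on many triangles of $T$, and the ``correct'' starting triangle depends on which direction toward $v$ the shortest beer path leaves $u$; this is precisely what the query is trying to compute. I expect to resolve it by showing that the triangles incident to $u$ form a contiguous sub-tree of $T$, so that the relevant choice is forced by the first chord whose removal separates $u$ from $v$'s triangle, a test implementable with one LCA call. A further subtlety is that shortest beer paths may be non-simple and so can double back inside a triangle; this is why the matrix entries must be defined as shortest walks rather than shortest simple paths, and it is exactly what guarantees closure of the beer-augmented matrices under min-plus composition.
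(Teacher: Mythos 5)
Your overall architecture is the paper's: triangulate, pass to the weak dual tree, attach a constant-size semigroup element to each dual edge that records both plain and ``beer-visited'' distances, and invoke the $O(m)$-preprocessing, $O(\alpha(m,n))$-query tree-path-product structure. (Your worry about choosing the ``correct'' starting triangle is a non-issue in the paper's formulation: its semigroup element $Q_{F,F'}$ stores the \emph{global} distances and beer distances between every vertex of $F$ and every vertex of $F'$, so any faces containing $u$ and $v$ work, and no LCA-based entry/exit bookkeeping is needed for this theorem.)

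However, there is a genuine gap in how you define the beer coordinate of the transition matrices. You define the entries as shortest walks \emph{inside the triangle} $\Delta$ that have visited a beer store. But the beer store visited by the optimal beer path from $u$ to $v$ need not be a vertex of any triangle on the dual path between $u$'s face and $v$'s face: the path may branch off at an intermediate face into a subtree hanging off the dual path, buy the beer there, and return. A product of purely local matrices reports $+\infty$ whenever no vertex on the dual path is a beer store, which is wrong. The paper's fix is a separate $O(n)$ preprocessing step (its Lemma~\ref{lem-crazy}): a two-pass dynamic program --- a post-order traversal of $D(G)$ computing beer distances restricted to the subgraph ``below'' each chord, followed by a pre-order traversal computing them for the subgraph ``above'' --- which yields the \emph{global} values $\dst_B(x,y)$ for every edge $(x,y)$ and $\dst_B(x,x)$ for every vertex $x$. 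These precomputed global quantities, not local walk lengths, seed the base-case semigroup elements. A second, related omission: after adding dummy chords of weight $+\infty$ the generalized triangle inequality fails, and even a plain shortest walk between two chord endpoints can detour \emph{backwards} through earlier separators, which a forward-only min-plus product misses. The paper first renormalizes every edge weight to the true distance between its endpoints (another $O(n)$ pair of tree traversals), after which shortest and shortest beer paths thread monotonically through the separating chords and the composition lemma holds. With these two preprocessing steps added, your argument goes through and coincides with the paper's proof.
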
 

By taking $m=n$, both the preprocessing time and the space used are 
$O(n)$, and for any two query vertices $u$ and $v$, both $\dst(u,v)$ 
and $\dst_B(u,v)$ can be computed in $O(\alpha(n))$ time. 

As another example, let $\log^* n$ be the number of times the function
$\log$ must be applied, when starting with the value $n$, until the
result is at most~$1$, and let $\log^{**} n$ be the number of times the
function $\log^*$ must be applied, again starting with $n$, until the
result is at most~$1$. Let $m = n \log^{**} n$. Since 
$\alpha(m,n) = O(1)$, we obtain a data structure with space and 
preprocessing time $O(n \log^{**} n)$ that can answer both distance 
and beer distance queries in $O(1)$ time.

As we mentioned before, beer distance queries have not been considered 
for any class of graphs. In fact, the only result on (non-beer) distance 
queries in outerplanar graphs that we are aware of is by 
Djidjev \emph{et al.}~\cite{DPZ91}. They show that an outerplanar graph 
with $n$ vertices can be preprocessed in $O(n \log n)$ time into a 
data structure of size $O(n \log n)$, such that any distance query 
can be answered in $O(\log n)$ time. Our result in Theorem~\ref{thmdq}  
significantly improves their result. 

We also show that the result in Theorem~\ref{thmdq} is optimal for beer 
distance queries, even if $G$ is a \emph{beer tree} 
(i.e., a beer graph whose underlying graph is a tree).
We do not know if the query time is optimal for (non-beer) distance 
queries.  

Our second result is on reporting the shortest beer path between two 
query vertices. 

\begin{theorem}  \label{thm2}  
Let $G$ be an outerplanar beer graph with $n$ vertices. 
We can preprocess $G$ in $O(n)$ time into a data structure of size 
$O(n)$, such that for any two vertices $u$ and $v$, the shortest beer 
path from $u$ to $v$ can be reported in $O(L)$ time, where $L$ is 
the number of vertices on this beer path. 
\end{theorem}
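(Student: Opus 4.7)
The plan is to bootstrap on the preprocessing of Theorem~\ref{thmdq}. Any shortest beer path from $u$ to $v$ splits as $\spp(u,b^\ast)$ concatenated with $\spp(b^\ast,v)$ for some optimal beer store $b^\ast$ minimizing $\dst(u,b)+\dst(b,v)$. Hence it is enough to (a) retrieve such a witness $b^\ast$ in $O(1)$ time and (b) report an ordinary shortest path in an outerplanar graph in time proportional to its number of vertices; concatenating the two halves and avoiding duplication of $b^\ast$ then gives the beer path in $O(L)$ time.

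The crux is the output-sensitive reporting of $\spp(x,y)$. I would first reduce to the biconnected case via the block-cut tree, splicing the path at cut vertices. Inside a biconnected outerplanar block with at least three vertices the boundary is a Hamiltonian cycle and the chords partition the interior into faces whose weak dual is a tree $T$. A classical fact is that the shortest path between two vertices traverses the faces on the $T$-path between a face containing $x$ and a face containing $y$, entering and leaving each face at two vertices of its boundary cycle and taking the cheaper of the two arcs between them. I would store, for each face, the cyclic order of its vertices together with prefix sums of edge weights along the cycle, so that given the entry and exit vertices the chosen arc can be identified in $O(1)$ and its vertices enumerated in time proportional to their number.

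To realize the $T$-walk output-sensitively, root $T$, install a linear-time LCA structure, store for every non-root face the edge shared with its parent, and record for each vertex of the block a face containing it. For a query, locate $f_x$ and $f_y$, find their LCA $f$, and walk $f_x \to f$ and $f_y \to f$: on each visited face the current entry vertex is known, the exit vertex is an endpoint of the edge shared with the next face on the walk, and the chosen arc is emitted; the meeting face $f$ is treated separately. Each visited face contributes at least one new emitted vertex, absorbing the $O(1)$ per-face bookkeeping and yielding total time $O(L)$. Finally, the beer path is obtained by looking up $b^\ast$ and invoking this routine twice.

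The principal obstacle will be augmenting the data structure of Theorem~\ref{thmdq} so that a witness $b^\ast$ can be obtained in strictly $O(1)$ time rather than $O(\alpha(n))$, while preserving the $O(n)$ space and preprocessing bounds; the compressed decomposition used there must be enriched with representative beer-store pointers at every node, so that no path-compression cost is incurred at query time. A secondary, milder issue is the block interface: cut vertices belong to several faces and to several blocks, and the selection of face representatives together with the splicing across blocks must be performed without ever emitting the same vertex twice.
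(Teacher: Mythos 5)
Your high-level plan (look up an optimal beer store $b^\ast$, then report the two ordinary shortest paths $\spp(u,b^\ast)$ and $\spp(b^\ast,v)$) is natural, but both halves have genuine gaps, and together they miss the central difficulty of the theorem. On the witness retrieval: you acknowledge that obtaining $b^\ast$ in $O(1)$ time is ``the principal obstacle,'' but you give no mechanism, and the obvious one does not work. Augmenting the semigroup elements $Q_{F,F'}$ of Theorem~\ref{thmdq} with witness pointers still requires a path-sum query over the dual tree, costing $\Theta(\alpha(m,n))$ time, and that is \emph{not} $O(L)$: in a fan (a vertex $v$ adjacent to $u_1,\ldots,u_N$) the shortest beer path between $u_1$ and $u_N$ can have five vertices while the faces containing $u_1$ and $u_N$ are $\Theta(n)$ apart in the dual. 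Worse, if you could really retrieve $b^\ast$ and identify the arcs of both halves in $O(1)$ time after $O(n)$ preprocessing, you would obtain $\dst_B(u,v)$ in $O(1)$ time, which the paper's reduction from on-line minimum spanning tree verification shows is impossible even for beer trees. The witness therefore has to be \emph{discovered during} an $O(L)$-time walk rather than looked up beforehand; this is exactly what the paper's per-query DAG does, carrying both $\dst(s,\cdot)$ and $\dst_B(s,\cdot)$ through a dynamic program over the endpoints of the chords crossed, and deciding only at the end on which segment the single beer detour occurs.

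Your output-sensitive routine for ordinary shortest paths is also not correct as described. First, the ``classical fact'' that one takes the cheaper of the two arcs of each face independently is false: the exit vertex of one face is the entry vertex of the next (one of the two endpoints of the shared chord), these choices interact across faces, and a local greedy choice need not be globally consistent or optimal -- one needs a DP over the two endpoints of every shared edge, which is again what the paper's DAG provides. Second, the invariant ``each visited face contributes at least one new emitted vertex'' fails in the fan example above: the shortest path $u_1,v,u_N$ has three vertices, yet a face-by-face walk along the dual visits $N-1$ faces and costs $\Theta(n)$, not $O(L)$. The paper avoids this by grouping the dual path into maximal fans $G[P_v]$, using closest-colour queries (Lemma~\ref{lem-last-label}) to jump over each fan in $O(1)$ time, handling distances and beer distances inside a single fan by separate $O(1)$-time structures (range-minimum queries on the chain arrays), and proving via Lemmas~\ref{lem-a-or-b} and~\ref{lem-unique} that each column of the DAG contributes a distinct vertex of the output path. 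Any correct proof needs some substitute for this ``one new output vertex per $O(1)$ work'' invariant, and your proposal does not supply one.
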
 

Observe that the query time in Theorem~\ref{thm2} does not depend on 
the number $n$ of vertices of the graph. Again, we are not aware of 
any previous work on reporting shortest beer paths. 
Djidjev \emph{et al.}~\cite{DPZ91} show that, after $O(n \log n)$ 
preprocessing and using $O(n \log n)$ space, the shortest (non-beer) 
path between two query vertices can be reported in $O(\log n + L)$ time, 
where $L$ is the number of vertices on the path. 

\subsection{Preliminaries and Organization} 
Throughout this paper, we only consider outerplanar beer graphs $G$. 
The number of vertices of $G$ is denoted by $n$. It is well known that 
$G$ has at most $2n-3$ edges.  
As in~\cite{DPZ91}, we say that $G$ satisfies the 
\emph{generalized triangle inequality}, if for every edge $(u,v)$ in 
$G$, $\dst(u,v) = \omega(u,v)$, i.e., the shortest path between $u$ and 
$v$ is the edge $(u,v)$. 

The outerplanar graph $G$ is called \emph{maximal}, if adding an edge 
between any two non-adjacent vertices of $G$ results in a graph that is 
not outerplanar. In this case, the number of edges is equal to $2n-3$. 
A maximal outerplanar graph $G$ is 2-connected, each 
internal face of $G$ is a triangle and the outer face of $G$ forms a 
Hamiltonian cycle. In such a graph, edges on the outer face will be 
referred to as \emph{external} edges, where all other edges will be 
referred to as \emph{internal} edges. 

The \emph{weak dual} of a maximal outerplanar graph $G$ is the graph 
$D(G)$ whose node set is the set of all internal faces of $G$, and in 
which $(F,F')$ is an edge if and only if the faces $F$ and $F'$ share 
an edge in $G$; see Figure~\ref{fig:dual}. 
For simplicity, we will refer to $D(G)$ as the dual of $G$. Observe that 
$D(G)$ is a tree with $n-2$ nodes, each of which has degree at most 
three. 

\begin{figure}[h]
  \centering
  \includegraphics[scale = 0.5]{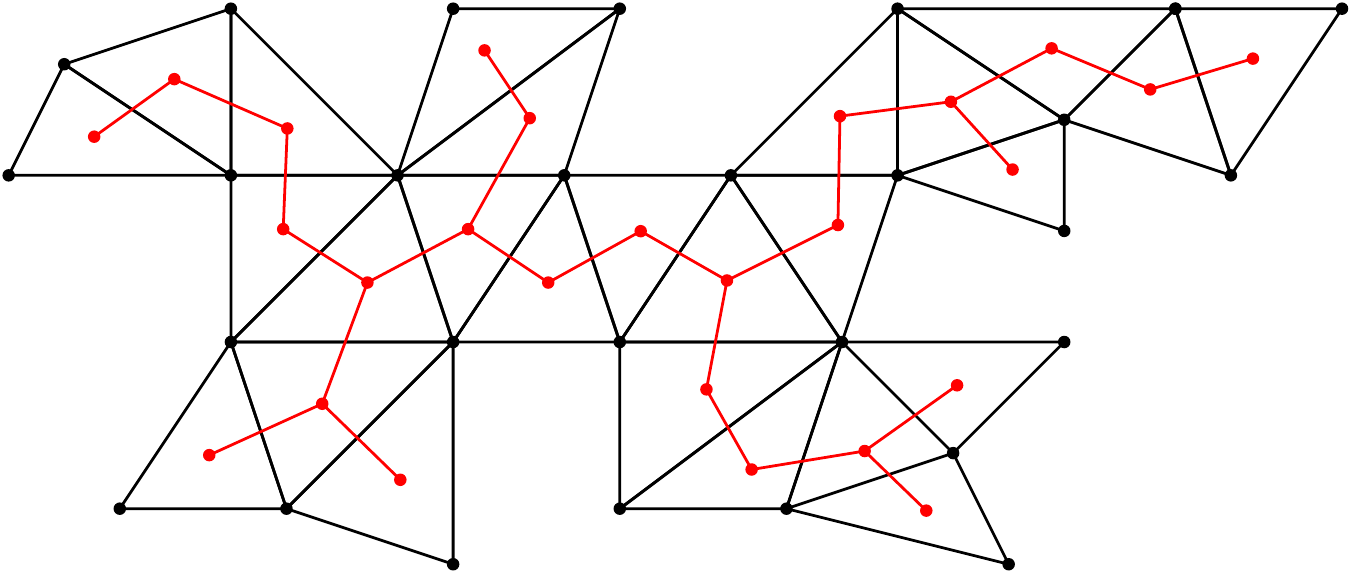}
  \caption{A maximal outerplanar graph shown in black. Its dual is 
      shown in red.}
  \label{fig:dual}
\end{figure}

If $H$ is a subgraph of the beer graph $G$, and $u$ and $v$ are vertices 
of $H$, then $\dst(u,v,H)$ and $\dst_B(u,v,H)$ denote the distance and 
beer distance between $u$ and $v$ in $H$, respectively. The shortest 
beer path in $H$ between $u$ and $v$ must be entirely within $H$. 
Observe that we use the shorthand $\dst(u,v)$ for $\dst(u,v,G)$, 
and $\dst_B(u,v)$ for $\dst_B(u,v,G)$. 

It will not be surprising that the algorithms for computing shortest 
beer paths use the dual $D(G)$. Thus, our algorithms will need some 
basic data structures on trees. These data structures will be 
presented in Section~\ref{sectree}. 

In Section~\ref{secmaxdist}, we will prove Theorem~\ref{thmdq} for 
maximal outerplanar beer graphs. We also prove that the 
result in Theorem~\ref{thmdq} is optimal, even for beer trees.
The proof of Theorem~\ref{thm2}, again for maximal outerplanar beer 
graphs, will be presented in Section~\ref{secJoyce}. 
Both Sections~\ref{secmaxdist} and~\ref{secJoyce} will use the result 
in Lemma~\ref{lem-crazy}, whose detailed proof will be given in 
Section~\ref{app:lem-crazy}. 

The extensions of Theorems~\ref{thmdq} and~\ref{thm2} to arbitrary 
outerplanar beer graphs will be given in Section~\ref{app:extension}. 
Finally, Section~\ref{secSSSBP} will present an $O(n)$-time 
algorithm for computing the single-source shortest beer path tree for 
any given source vertex.

\section{Query Problems on Trees} \label{sectree} 
Our algorithms for computing beer shortest paths in an outerplanar 
graph $G$ will use the dual of $G$, which is a tree. In order 
to obtain fast implementations of these algorithms, we need to be 
able to solve several query problems on this tree. In this section, 
we present all query problems that will be used in later sections.  

\begin{lemma}  
\label{lemLCAetc} 
Let $T$ be a tree with $n$ nodes that is rooted at an arbitrary node. 
We can preprocess $T$ in $O(n)$ time, such that each of the following 
queries can be answered in $O(1)$ time: 
\begin{enumerate}
\item Given a node $u$ of $T$, return its level, denoted by $\level(u)$, 
      which is the number of edges on the path from $u$ to the root. 
\item Given two nodes $u$ and $v$ of $T$, report their lowest common 
      ancestor, denoted by $\lca(u,v)$. 
\item Given two nodes $u$ and $v$ of $T$, decide whether or not $u$ is 
      in the subtree rooted at $v$. 
\item Given two distinct nodes $u$ and $v$ of $T$, report the second 
      node on the path from $u$ to $v$. 
\item Given three nodes $u$, $v$, and $w$, decide whether or not $w$ is 
      on the path between $u$ and $v$. 
\end{enumerate}
\end{lemma}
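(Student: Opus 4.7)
The plan is to handle the five queries via a single DFS preprocessing pass, combined with two well-known constant-time tree query data structures: LCA queries (Harel--Tarjan) and level-ancestor queries (Bender--Farach-Colton). Each of these admits $O(n)$ preprocessing and $O(1)$ query time, so together they give the claimed bounds.

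First, I would perform a DFS from the root, storing for each node $u$ its depth $\level(u)$, its parent $\mathit{par}(u)$, and its DFS entry and exit times $\mathit{in}(u)$ and $\mathit{out}(u)$, in $O(n)$ total time. Query~1 is then a single table lookup. For query~3, note that $u$ lies in the subtree rooted at $v$ if and only if $\mathit{in}(v) \leq \mathit{in}(u) \leq \mathit{out}(v)$, a constant-time check. For query~2, I would build the Harel--Tarjan LCA structure on $T$ during preprocessing and answer each $\lca(u,v)$ query directly.

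For query~4, I would split into two cases using query~3 applied to $u$ and $v$. If $v$ lies in the subtree rooted at $u$, then the second node on the path from $u$ to $v$ is the unique child of $u$ that is an ancestor of $v$; equivalently, it is the ancestor of $v$ at depth $\level(u)+1$, which is returned in $O(1)$ by a level-ancestor query on $T$. Otherwise the path from $u$ to $v$ must first go to $\mathit{par}(u)$, so the answer is simply the stored parent pointer.

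For query~5, I would use the identity that a node $w$ lies on the simple path between $u$ and $v$ if and only if $d(u,w) + d(w,v) = d(u,v)$, where $d(\cdot,\cdot)$ counts edges. Using $d(x,y) = \level(x) + \level(y) - 2\,\level(\lca(x,y))$, each of the three edge-distances is computed in $O(1)$ via queries~1 and~2, so the test runs in $O(1)$ time. The main obstacle is query~4, whose reduction to a level-ancestor query requires importing the Bender--Farach-Colton machinery; the other four queries reduce to a DFS, an entry/exit-time comparison, or a single LCA invocation.
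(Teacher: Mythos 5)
Your proof is correct, and for three of the five queries it takes a route that differs from the paper's. For query~3 you use DFS entry/exit-time interval containment, whereas the paper tests whether $\lca(u,v)=v$; both are standard and equally cheap. For query~4 the paper simply cites a result of Chazelle as a black box, while you give an explicit reduction: go to $\mathit{par}(u)$ unless $v$ lies below $u$, in which case query a level-ancestor structure for the ancestor of $v$ at depth $\level(u)+1$. This is a valid alternative (level-ancestor queries do admit $O(n)$ preprocessing and $O(1)$ query time), and it is more self-contained in the sense that the reduction is spelled out, though it imports a second nontrivial data structure where the paper imports a single citation. For query~5 your argument is arguably cleaner than the paper's: you use the characterization that $w$ lies on the unique $u$--$v$ path if and only if $d(u,w)+d(w,v)=d(u,v)$, with each distance computed from levels and one LCA, whereas the paper runs a three-way case analysis on the relative positions of $u$, $v$, and $\lca(u,v)$ using subtree and LCA tests. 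Both are correct; yours replaces the case analysis with a single arithmetic identity, at the cost of three LCA calls instead of at most two. One small point worth checking in your write-up: your case split for query~4 is safe at the root, since when $u$ is the root every $v$ falls into the subtree case and the parent pointer is never dereferenced.
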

\begin{proof} 
The first claim follows from the fact that by performing an $O(n)$--time 
pre-order traversal of $T$, we can compute $\level(u)$ for each node $u$. 
A proof of the second claim can be found in 
Harel and Tarjan~\cite{DBLP:journals/siamcomp/HarelT84}
and Bender and Farach-Colton~\cite{bf-lpr-00}. 
The third claim follows from the fact that $u$ is in the subtree rooted 
at $v$ if and only $\lca(u,v) = v$. 
A proof of the fourth claim can be found in 
Chazelle~\cite[Lemma~15]{ChazelleFreeTreeComputing}.
The fifth claim follows from the following observations. 
Assume that $u$ is in the subtree rooted at $v$. Then $w$ is on the path 
between $u$ and $v$ if and only if $\lca(u,w) = w$ and $w$ is in the 
subtree rooted at $v$. The case when $v$ is in the subtree rooted at 
$u$ is symmetric. Assume that $\lca(u,v) \not\in \{u,v\}$. 
Then $w$ is on the path between $u$ and $v$ if and only if $w$ is on 
the path between $u$ and $\lca(u,v)$ or $w$ is on the path between $v$ 
and $\lca(u,v)$. 
\end{proof} 

\subsection{Closest-Colour Queries in Trees} 
Let $T$ be a tree with $n$ nodes and let $\mathcal{C}$ be a set of 
\emph{colours}. For each colour $c$ in $\mathcal{C}$, we are given a 
path $P_c$ in $T$. Even though these paths may share nodes, each node 
of $T$ belongs to at most a constant number of paths. This implies that 
the total size of all paths $P_c$ is $O(n)$. We assume that each node 
$u$ of $T$ stores the set of all colors $c$ such that $u$ is on the 
path $P_c$. 

In a \emph{closest-colour query}, we are given two nodes $u$ and $v$ 
of $T$, and a colour $c$, such that $u$ is on the path $P_c$. The answer 
to the query is the node on $P_c$ that is closest to $v$. 
Refer to Figure~\ref{fig:colour} for an illustration. 

\begin{figure}[h]
  \centering
  \includegraphics[scale = 0.5]{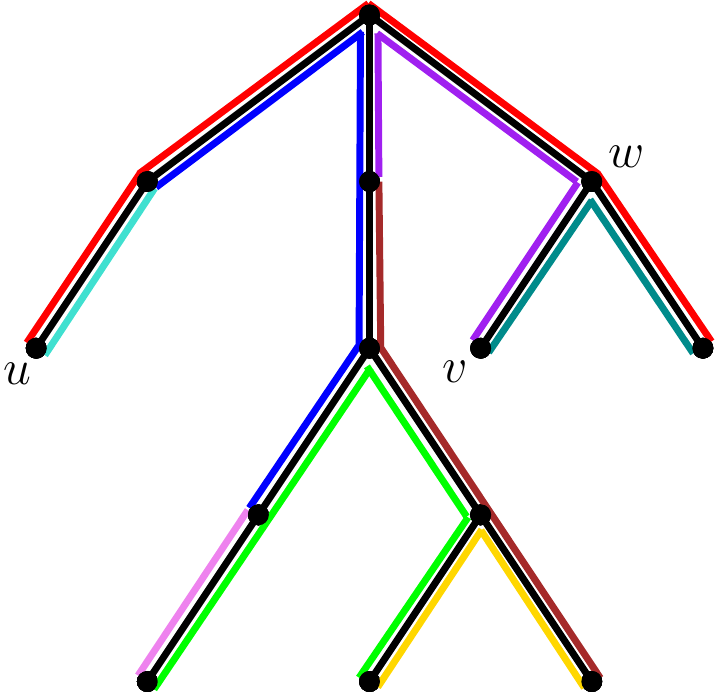}
  \caption{A tree $T$ and a collection of coloured paths. 
      For a query with nodes $u$ and $v$, and color ``red'', the 
      answer is the node $w$.}
  \label{fig:colour}
\end{figure}

\begin{lemma} 
\label{lem-last-label}
After an $O(n)$--time preprocessing, we can answer any closest-colour
query in $O(1)$ time. 
\end{lemma}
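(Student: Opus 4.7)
The plan is to reduce a closest-colour query to the computation of the \emph{median} (Steiner point) of three nodes in $T$, which in turn reduces to a constant number of LCA queries. The query node $u$ itself will play no active role in the computation; it serves only as a witness that $c$ indexes a real path.

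For preprocessing, I would first apply Lemma~\ref{lemLCAetc} to $T$ so that $\lca$, $\level$, and path-membership queries all run in $O(1)$ time. I would then build an array indexed by colour that stores the two endpoints $a_c,b_c$ of each path $P_c$. Since each node of $T$ belongs to $O(1)$ paths, there are $O(n)$ colours in total and the combined length of all paths is $O(n)$; the endpoints can be extracted in $O(n)$ time by a single sweep of the colour lists at the nodes, because a node $u$ carrying colour $c$ is an endpoint of $P_c$ iff at most one of its $T$-neighbours also carries $c$, an $O(1)$ test per pair $(u,c)$.

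To answer a query $(u,v,c)$, I would look up $(a_c,b_c)$ and return the median $\mu$ of the triple $\{a_c,b_c,v\}$, i.e.\ the unique node of $T$ that lies on all three pairwise paths of the triple. The key identity is that $\mu$ coincides with the deepest of the three nodes $\lca(a_c,b_c)$, $\lca(a_c,v)$, and $\lca(b_c,v)$; depths are compared via $\level$, so the whole query runs in $O(1)$ time. For correctness, note that $\mu$ lies on $P_c$ by definition of the median, and for any other $w\in P_c$ the unique $v$-to-$w$ path in $T$ must pass through $\mu$ (else $T$ would contain a cycle), so the distance from $v$ to $w$ in $T$ is at least the distance from $v$ to $\mu$. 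Hence $\mu$ is indeed the closest node of $P_c$ to $v$.

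The main obstacle will be pinning down the folklore identification of the median with the deepest pairwise LCA. I expect this to be a short case analysis after rooting $T$ arbitrarily: among the three pairwise LCAs, two necessarily coincide (both being the shallowest of the three relevant ``joins''), and the third, a descendant of that common value, is the median. Once this fact is in place, the lemma follows by combining $O(1)$-time LCA and $\level$ lookups with the $O(n)$-time endpoint table, giving the claimed bounds.
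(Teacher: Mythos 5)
Your proof is correct, but it takes a genuinely different route from the paper's. The paper stores, with every node of $P_c$, pointers to the two endpoints \emph{and} to the highest node $c^h$ of $P_c$, and then answers a query by a case analysis on the relative positions of $u$, $v$, and $\lca(u,v)$ (six cases in all, several of which use $c^h$ directly). You instead observe that the answer is always the median of the triple $\{a_c,b_c,v\}$, computed as the deepest of the three pairwise LCAs, which collapses the entire case analysis into one formula and makes the node $u$ irrelevant to the query. Both your correctness argument (the $v$-to-$w$ path for any $w\in P_c$ passes through the median, so the median minimizes distance) and the folklore identity you rely on (two of the three pairwise LCAs coincide and are ancestors of the third, which is the median) are standard and easy to verify; the degenerate cases ($P_c$ a single node, $v\in P_c$, $u=v$) all fall out of the same formula. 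Your endpoint-extraction sweep also runs in $O(n)$ time as claimed, since the neighbour test costs $O(1)$ per (node, colour) pair and $\sum_u \deg(u)=O(n)$ in a tree. What the paper's version buys is that it never needs the median identity, only LCA and subtree-membership primitives already listed in Lemma~\ref{lemLCAetc}; what yours buys is a shorter, more uniform query algorithm with less stored data (no $c^h$). Either proof establishes the lemma.
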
 
\begin{proof} 
We take an arbitrary node of $T$ and make it the root. Then we preprocess 
$T$ such that each of the queries in Lemma~\ref{lemLCAetc} can be 
answered in $O(1)$ time.

For each colour $c$, let $c^1$ and $c^2$ be the end nodes of the path 
$P_c$, and let $c^h$ be the highest node on $P_c$ in the tree 
(i.e., the node on $P_c$ that is closest to the root). 
With each node of $P_c$, we store pointers to $c^1$, $c^2$, and $c^h$.

Since each node of $T$ is in a constant number of coloured paths, we 
can compute the pointers for all the coloured paths in $O(n)$ total 
time. 

The query algorithm does the following. Let $u$ and $v$ be two nodes 
of $T$, and let $c$ be a colour such that $u$ is on the $c$-coloured 
path $P_c$. 

If $u=v$ or $v$ is also on $P_c$, then we return the node $v$. From 
now on, assume that $u \neq v$ and $v$ is not on $P_c$. 
Below, we consider all possible cases, which are illustrated in 
Figure~\ref{fig:lem4}.

\begin{figure}[h]
  \centering
  \includegraphics[scale = 0.7]{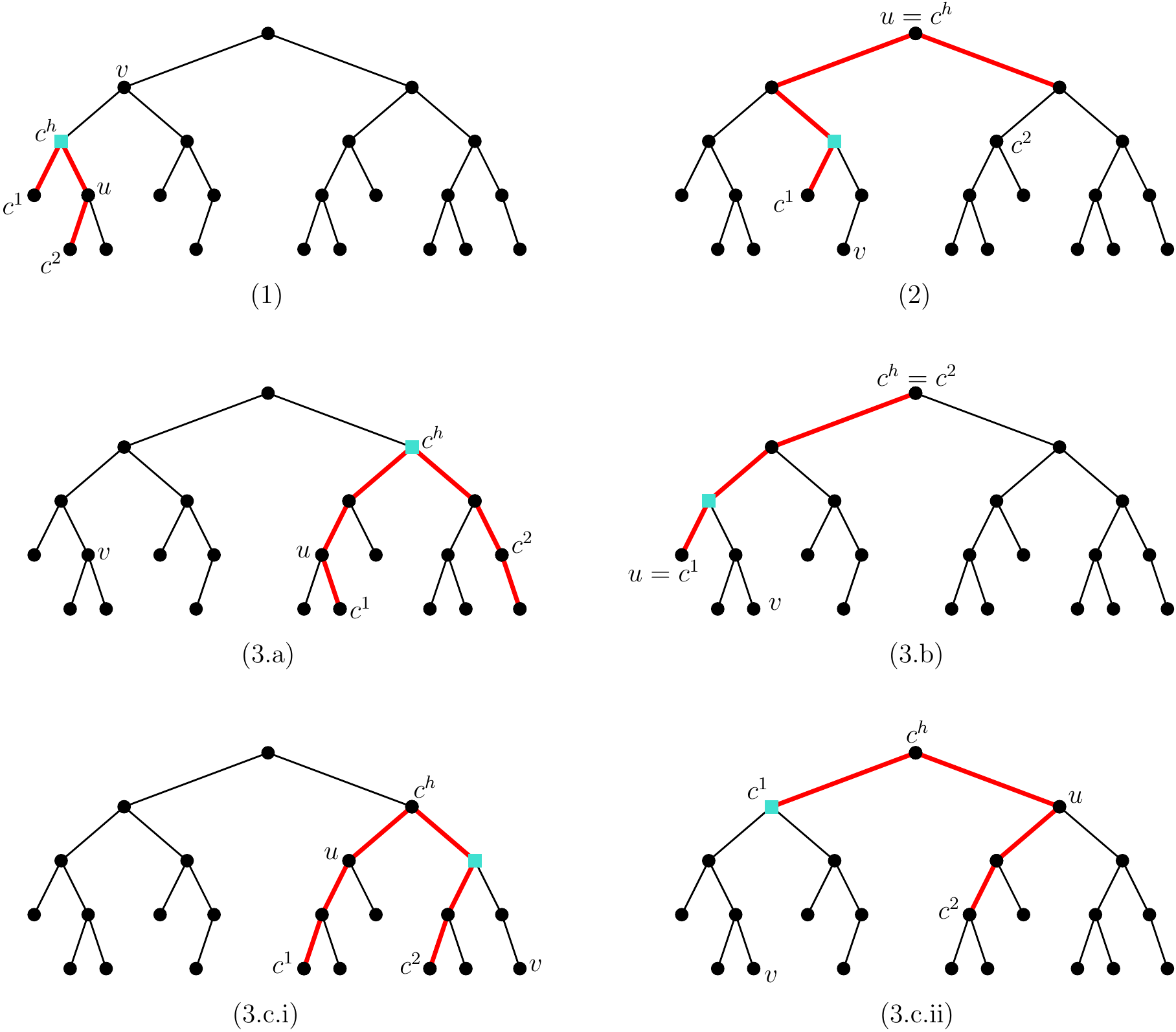}
  \caption{Illustrating all possible cases in the proof of 
           Lemma~\ref{lem-last-label}. The path $P_c$ is red and the 
           blue square indicates the node that is returned by the 
           closest-colour query.}
  \label{fig:lem4}
\end{figure}

\begin{enumerate}
\item If $\lca(u,v) = v$, then $u$ is in the subtree rooted at $v$. 
      In this case, we return $c^h$, the highest $c$-coloured node. 
\item Assume that $\lca(u,v) = u$. Then $v$ is in the subtree 
      rooted at $u$. The closest $c$-coloured node to $v$ is either
      $\lca(v,c^1)$ or $\lca(v,c^2)$. Since $v$ is lower than $u$ in the 
      tree, we know that the closest $c$-colored node to $v$ is at 
      $\level(u)$ or greater. If 
      $\level(\lca(v,c^1))>\level(\lca(v,c^2))$, 
      then $\lca(v,c^1)$ is lower in the tree and closer to $v$, so we 
      return $\lca(v,c^1)$ . Otherwise, $\lca(v,c^2)$ is lower in $T$
      or equal to both $\lca(v,c^1)$ and $u$, so we return  
      $\lca(v,c^2)$.
 \item Assume that $\lca(u,v) \neq u$ and $\lca(u,v) \neq v$. Then 
      $u$ and $v$ are in different subtrees of $\lca(u,v)$.      
      \begin{enumerate}
      \item If $\level(c^h) > \level(\lca(u,v))$, then we 
            return $c^h$. 
      \item If $\level(c^h) < \level(\lca(u,v))$, then we 
            return $\lca(u,v)$. 
      \item Assume that $\level(c^h) = \level(\lca(u,v))$. 
            Observe that exactly one end node of the $c$-coloured path 
            is in the subtree rooted at $u$. 
            \begin{enumerate}
            \item If $c^1$ is in the subtree rooted at 
                  $u$, then we return $\lca(v,c^2)$. 
            \item If $c^2$ is in the subtree rooted at 
                  $u$, then we return $\lca(v,c^1)$. 
            \end{enumerate}
      \end{enumerate}
\end{enumerate}

Using Lemma~\ref{lemLCAetc}, each of these case takes $O(1)$ time. 
Therefore, the entire query algorithm takes $O(1)$ time. 
\end{proof} 

\subsection{Path-Sum Queries in Trees} \label{secpathsum}
Let $(W,\oplus)$ be a semigroup. Thus, $W$ is a set and 
$\oplus : W \times W \rightarrow W$ is an associative binary operator. 
We assume that for any two elements $s$ and $s'$ in $W$, the value of 
$s \oplus s'$ can be computed in $O(1)$ time. 

Let $T$ be a tree with $n$ nodes in which each edge $e$ stores a value 
$s(e)$, which is an element of $W$. For any two distinct nodes $u$ and 
$v$ in $T$, we define their \emph{path-sum} $\ps(u,v)$ as follows: 
Let $e_1,e_2,\ldots,e_k$ be the edges on the path in $T$ between $u$ 
and $v$. Then we define $\ps(u,v) = \oplus_{i=1}^k s(e_i)$. 

Chazelle~\cite{ChazelleFreeTreeComputing} considers the problem of 
preprocessing the tree $T$, such that for any two distinct query 
nodes $u$ and $v$, the value of $\ps(u,v)$ can be reported. 
(See also Alon and Schieber~\cite{AlonOptimalPreprocessing}, 
Thorup~\cite{ThorupParallelShortcutting}, and 
Chan \emph{et al.}~\cite{timothy}.) Chazelle's result is stated in terms 
of the inverse Ackermann function; see Section~\ref{secOR}. 

\begin{lemma} \label{lemChazelle}  
Let $T$ be a tree with $n$ nodes in which each edge stores an element 
of the semigroup $(W,\oplus)$. 
For any integer $m \geq n$, we can preprocess $T$ in $O(m)$ time 
into a data structure of size $O(m)$, such that any path-sum query can 
be answered in $O(\alpha(m,n))$ time.   
\end{lemma}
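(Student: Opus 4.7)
The plan is to build the data structure in three conceptual layers.

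First, I would reduce a general path-sum query $\ps(u,v)$ to two \emph{vertical} path-sum queries (from a node to one of its ancestors). Using the $O(1)$-time LCA data structure from Lemma~\ref{lemLCAetc}, I compute $w = \lca(u,v)$ and observe that $\ps(u,v) = \ps(u,w) \oplus \ps(w,v)$, where both sub-paths are monotone in depth. It therefore suffices to answer any query that asks for the $\oplus$-sum of the edge weights on a monotone root-to-descendant segment; two such values can then be glued with a single additional $\oplus$-operation.

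Second, I would reduce vertical tree queries to range-sum queries on linear arrays via a complexity-preserving tree decomposition in the spirit of Chazelle~\cite{ChazelleFreeTreeComputing}. Partition $T$ hierarchically into macro-paths and $O(1)$-sized micro-trees; along each macro-path record the sequence of precomputed ``shortcut'' semigroup values obtained from the level below, and tabulate micro-tree answers exhaustively. A single vertical path in $T$ then rewrites as the concatenation of a constant number of contiguous intervals drawn from a constant number of such arrays, with only $O(1)$ extra $\oplus$-operations used to concatenate the partial answers.

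Third, I would invoke the classical semigroup range-sum result on arrays (Yao; see also Alon and Schieber~\cite{AlonOptimalPreprocessing} and Chan \emph{et al.}~\cite{timothy}): an $n$-element sequence over $(W,\oplus)$ can be preprocessed in $O(m)$ time and space so that any contiguous range sum is returned using $O(\alpha(m,n))$ applications of $\oplus$. Applying this on each array produced by the reduction, and paying only $O(1)$ further $\oplus$-operations for the gluing, yields the claimed $O(\alpha(m,n))$ query time within $O(m)$ preprocessing and space.

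The main obstacle is arranging the tree-to-array reduction so that each vertical path genuinely decomposes into only a \emph{constant} number of array intervals, rather than the $\Theta(\log n)$ pieces one gets from naive heavy-path or centroid decompositions, which would blow the query time up to $O(\alpha(m,n)\log n)$. This is precisely what the ``complexity-preserving'' structure of Chazelle's decomposition delivers, and it is the reason I would follow that framework rather than attempt an ad-hoc reduction; the remaining ingredients (LCA and semigroup range-sums on arrays) are standard and used here as black boxes.
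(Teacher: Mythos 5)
The paper does not prove this lemma at all: it is imported verbatim from Chazelle~\cite{ChazelleFreeTreeComputing}, with pointers to Alon and Schieber~\cite{AlonOptimalPreprocessing}, Thorup~\cite{ThorupParallelShortcutting}, and Chan \emph{et al.}~\cite{timothy}. So the only fair comparison is between your sketch and the cited constructions. Your first step (splitting $\ps(u,v)$ at $w=\lca(u,v)$ into two depth-monotone pieces glued by one $\oplus$) and your third step (Yao's semigroup range-sum structure on arrays, with $O(m)$ preprocessing and $O(\alpha(m,n))$ operations per query) are both correct and standard.

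The gap is entirely in your second step, which is where all of the technical content of the lemma lives. You assert that a ``macro-path / $O(1)$-sized micro-tree'' decomposition rewrites every vertical path as a concatenation of a \emph{constant} number of contiguous array intervals, and you justify this only by saying that Chazelle's complexity-preserving framework ``delivers'' it. That is circular --- it cites the very result being proven --- and, more importantly, it is not what the cited constructions actually do. No path-based decomposition of a general tree is known that turns every root-to-descendant path into $O(1)$ contiguous ranges of linear arrays: heavy-path and long-path decompositions give $\Theta(\log n)$ pieces in the worst case (the obstacle you yourself name), and shrinking micro-trees to constant size does not reduce the number of macro-paths a vertical path crosses. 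Chazelle and Alon--Schieber do not reduce the tree problem to the array problem at all; they run the Ackermann-style recursion \emph{directly on the tree}, partitioning it into connected pieces whose sizes follow the $A_\ell$ hierarchy, precomputing $\oplus$-sums from each node to the boundary of its piece, and recursing both inside pieces and on the contracted tree of pieces, so that a query spends $O(1)$ semigroup operations per recursion level over $O(\alpha(m,n))$ levels. If you intend to treat Lemma~\ref{lemChazelle} as a black box, as the paper does, simply cite it; if you intend to prove it, the tree-level recursive partition is the part you must supply, and your proposed reduction to a constant number of array ranges would fail.
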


\begin{remark} \label{remark2}
\emph{Assume that $(W,\oplus)$ is the semigroup, where $W$ is the set 
of all real numbers and the operator $\oplus$ takes the minimum of its 
arguments. In this case, we will refer to a query as a 
\emph{path-minimum query}. For this semigroup, the result of 
Lemma~\ref{lemChazelle} is optimal: Any data structure that can be 
constructed in $O(m)$ time has worst-case query time 
$\Omega(\alpha(m,n))$. To prove this, assume that we can answer any 
query in $o(\alpha(m,n))$ time. Then the on-line minimum spanning tree 
verification problem on a tree with $n$ vertices and $m \geq n$ 
queries can be solved in $o(m \cdot \alpha(m,n))$ time, 
by performing a path-maximum query for the endpoints of
each edge $e$ and checking that the weight of $e$ is larger
than the path-maximum. This
contradicts the lower bound for this problem proved by 
Pettie~\cite{PettieInverseAckermann}. 
}
\end{remark}

\section{Beer Distance Queries in Maximal Outerplanar Graphs}
\label{secmaxdist}

Let $G$ be a maximal outerplanar beer graph with $n$ vertices that 
satisfies the generalized triangle inequality. We will show how to 
preprocess $G$, such that for any two vertices $u$ and~$v$, the weight, 
$\dst_B(u,v)$, of a shortest beer path between $u$ and $v$ can be 
reported. Our approach will be to define a special semigroup 
$(W,\oplus)$, such that each element of $W$ ``contains'' certain 
distances and beer distances. With each edge of the dual $D(G)$, 
we will store one element of the set $W$. As we will see later, a beer 
distance query can then be reduced to a path-sum query in $D(G)$. 
Thus, by applying the results of Section~\ref{secpathsum}, we 
will obtain a proof of Theorem~\ref{thmdq}. 

We will need the first claim in the following lemma. The second claim 
will be used in Section~\ref{secJoyce}. 

\begin{lemma} 
\label{lem-crazy} 
Consider the beer graph $G$ as above. 
\begin{enumerate}
\item In $O(n)$ total time, we can compute $\dst_B(u,u)$ for each 
      vertex $u$ of $G$, and $\dst_B(u,v)$ for each edge $(u,v)$ in $G$. 
\item After an $O(n)$--time preprocessing of $G$, we can report, 
      \begin{enumerate}
      \item for any query edge $(u,v)$ of $G$, the shortest beer path 
            between $u$ and $v$ in $O(L)$ time, where $L$ is the number 
            of vertices on this path, 
      \item for any query vertex $u$ of $G$, the shortest beer path from 
            $u$ to itself in $O(L)$ time, where $L$ is the number of 
            vertices on this path. 
      \end{enumerate}
\end{enumerate}
\end{lemma}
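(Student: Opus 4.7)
The plan is to run a two-pass dynamic program on the weak dual $D(G)$, which is a tree of $n-2$ triangular faces, each of degree at most three. Root $D(G)$ at an arbitrary face. For a face $F$ with vertex set $V(F)=\{a,b,c\}$, let $G_F^{\downarrow}$ be the subgraph of $G$ formed by $F$ together with all descendant faces of $F$ in $D(G)$, and let $G_F^{\uparrow}$ be the complementary subgraph, namely $F$ together with all faces that are not descendants of $F$. For each face $F$ I would maintain a constant-size table storing, for every ordered pair $x,y \in V(F)$, both $\dst(x,y,G_F^{\downarrow})$ and $\dst_B(x,y,G_F^{\downarrow})$ (allowing $x=y$ in the beer case), and a second table recording the analogous quantities inside $G_F^{\uparrow}$. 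The bottom-up pass fills in the $\downarrow$-tables: a leaf face's table is read off directly from its three edge weights and the beer-store flags of its vertices, while for an internal face $F$, its table is assembled from the triangle $F$ together with the at most two children's $\downarrow$-tables by concatenating subpaths through the endpoints of each shared edge. The top-down pass fills in the $\uparrow$-tables by combining, at each face, the parent's $\uparrow$-table with the sibling's $\downarrow$-table. Each combine step is $O(1)$, so both passes together take $O(n)$ time.

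The correctness of this decomposition rests on the following claim, which I would prove using the generalized triangle inequality: for any vertex $u \in V(G_F^{\downarrow})$ and any $v \in V(F)$, a shortest path from $u$ to $v$ in $G$ lies entirely in $G_F^{\downarrow}$, and a shortest beer path from $u$ to $v$ whose beer store lies in $G_F^{\downarrow}$ also lies entirely in $G_F^{\downarrow}$. The reason is that any excursion into $G_F^{\uparrow}$ must exit and re-enter $G_F^{\downarrow}$ through two vertices of $F$, but those two vertices are joined by an edge of the triangle $F$ itself, so by the generalized triangle inequality the excursion is no shorter than the direct $G_F^{\downarrow}$ path. Once both tables have been computed, for each vertex $u$ I pick any face $F$ with $u \in V(F)$ and obtain $\dst_B(u,u)$ in $O(1)$ by taking the minimum over (i) a beer path staying in $G_F^{\downarrow}$, (ii) a beer path staying in $G_F^{\uparrow}$, and (iii) a beer path that leaves $u$ into $G_F^{\downarrow}$, crosses through another vertex of $V(F)$, and returns through $G_F^{\uparrow}$ (together with its symmetric counterpart). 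A symmetric $O(1)$ formula gives $\dst_B(u,v)$ for each edge $(u,v)$, using a face $F$ that contains both endpoints. Summing over all vertices and over the $O(n)$ edges, this proves part~1.

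For part~2, I would decorate each entry of the $\downarrow$- and $\uparrow$-tables with $O(1)$ backpointers recording which child subtree, which crossing vertex of the shared edge, and which side of the decomposition held the beer store in the argmin of its recurrence. These pointers only increase preprocessing time and space by a constant factor. To answer a query for the beer path of an edge $(u,v)$, or for a self-beer-path from some vertex $u$ to itself, I would locate a face $F$ containing the query, identify the winning case in the extraction formula, and recursively expand each contributing subpath by following its backpointers. Each recursive step costs $O(1)$ and contributes at least one new vertex to the output path, so the query takes $O(L)$ time.

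The main obstacle I expect is simply the volume of case analysis. The combine operation at each face has many subcases, depending on which pair of vertices is being connected, which of the at most two children the subpath passes through, whether a beer store is visited inside the triangle $F$ itself or inside a child's subtree, and whether the optimum path crosses the shared edge once or twice. None of this is conceptually deep, but enumerating all cases carefully, and arranging the backpointers so that path reconstruction never wastes time, is the part that makes the proof long.
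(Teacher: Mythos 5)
Your proposal is correct and follows essentially the same route as the paper: a two-pass (post-order then pre-order) dynamic program over the dual tree $D(G)$, splitting $G$ at each dual edge into a ``below'' and an ``above'' subgraph, using the generalized triangle inequality to argue that a shortest (beer) path never leaves the side containing its endpoints and beer store, and reconstructing paths from $O(1)$ backpointers in $O(L)$ time. Your tables are indexed by faces rather than by edges of $G$, but this is only a cosmetic difference in bookkeeping.
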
 
\begin{proof}
We choose an arbitrary face $R$ of $G$ and make it the root of $D(G)$. 
Let $(u,v)$ be any edge of $G$. This edge divides $G$ into two outerplanar 
subgraphs, both of which contain $(u,v)$ as an edge. Let $G_{uv}^R$ be 
the subgraph that contains the face $R$,
and let $G_{uv}^{\neg R}$ denote the other subgraph. Note that if 
$(u,v)$ is an external edge, then $G_{uv}^R = G$ and $G_{uv}^{\neg R}$ 
consists of the single edge $(u,v)$. 
By the generalized triangle inequality, the shortest beer path between 
$u$ and $v$ is completely in $G_{uv}^R$ or completely in 
$G_{uv}^{\neg R}$. The same is true for the shortest beer path from $u$ 
to itself. Thus, for each edge $(u,v)$ of $G$,
\[ \dst_B(u,v) = \min \left( \dst_B(u,v,G_{uv}^R) , 
                                 \dst_B(u,v,G_{uv}^{\neg R}) \right) , 
\]
\[ \dst_B(u,u) = \min \left( \dst_B(u,u,G_{uv}^R) , 
                                 \dst_B(u,u,G_{uv}^{\neg R}) \right) .  
\]

By performing a post-order traversal of $D(G)$, we can compute 
$\dst_B(u,v,G_{uv}^{\neg R})$ and $\dst_B(u,u,G_{uv}^{\neg R})$ for all 
edges $(u,v)$, in $O(n)$ total time. After these values have been 
computed, we perform a pre-order traversal of $D(G)$ and obtain 
$\dst_B(u,v,G_{uv}^R)$ and $\dst_B(u,u,G_{uv}^R)$, again for all edges 
$(u,v)$, in $O(n)$ total time. The details will be given in 
Section~\ref{app:lem-crazy}. 
\end{proof} 

In the rest of this section, we assume that all beer distances in the 
first claim of Lemma~\ref{lem-crazy} have been computed. 

For any two distinct internal faces $F$ and $F'$ of $G$, let $Q_{F,F'}$ 
be the union of the two sets  
\[ \{ (u,v,\dst(u,v),\dd) \mid u \mbox{ is a vertex of } F , 
          v \mbox{ is a vertex of } F'\}  
\]
and 
\[ \{ (u,v,\dst_B(u,v),\bsd) \mid u \mbox{ is a vertex of } F , 
          v \mbox{ is a vertex of } F'\} ,  
\]
where the ``bits'' $\dd$ and $\bsd$ indicate whether the tuple 
represents a distance or a beer distance. In words, $Q_{F,F'}$ is the 
set of all shortest path distances and all shortest beer distances 
between a vertex in $F$ and a vertex in $F'$. Since each internal face 
has three vertices, the set $Q_{F,F'}$ has exactly $18$ elements. 

\begin{observation} 
Let $u$ and $v$ be vertices of $G$, and let $F$ and $F'$ be internal 
faces that contain $u$ and $v$ as vertices, respectively. 
\begin{enumerate}
\item If $F=F'$, then we can determine both $\dst(u,v)$ and 
$\dst_B(u,v)$ in $O(1)$ time. 
\item If $F \neq F'$ and we are given the set $Q_{F,F'}$, then we can 
determine both $\dst(u,v)$ and $\dst_B(u,v)$ in $O(1)$ time. 
\end{enumerate}
\end{observation}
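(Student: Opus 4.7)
The plan is to handle the two cases separately, exploiting the fact that internal faces of a maximal outerplanar graph are triangles and that $Q_{F,F'}$ has constant size.

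For claim~1, I would argue as follows. If $u = v$, then $\dst(u,v) = 0$, and $\dst_B(u,v) = \dst_B(u,u)$ is already tabulated per vertex by the first claim of Lemma~\ref{lem-crazy}, so both quantities are available in $O(1)$ time. If $u \neq v$, then since $F$ is an internal face of a maximal outerplanar graph, it is a triangle, so $u$ and $v$ are endpoints of an edge $(u,v)$ of $G$. The generalized triangle inequality gives $\dst(u,v) = \omega(u,v)$, and the value $\dst_B(u,v)$ has been computed and stored with the edge $(u,v)$, again by Lemma~\ref{lem-crazy}. In all subcases the lookup takes $O(1)$ time.

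For claim~2, I would note that $F$ has exactly three vertices and $F'$ has exactly three vertices, so the number of ordered pairs $(x,y)$ with $x$ in $F$ and $y$ in $F'$ is nine. The set $Q_{F,F'}$ therefore consists of exactly $18$ tuples: nine of the form $(x,y,\dst(x,y),\dd)$ and nine of the form $(x,y,\dst_B(x,y),\bsd)$. Given the specific query pair $(u,v)$, we scan this constant-size set and retrieve the two tuples whose first two coordinates are $(u,v)$, reading off $\dst(u,v)$ and $\dst_B(u,v)$ from their third coordinates. This takes $O(1)$ time.

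There is essentially no obstacle in this proof; the statement is really just bookkeeping that sets up the next stage of the argument, where each edge of $D(G)$ will store enough information for $Q_{F,F'}$ to be reconstructed via a semigroup path-sum. The only conceptual point worth flagging is the appeal to the generalized triangle inequality in case~1, which is what guarantees that the single edge inside the triangle $F$ is the shortest path between two of its vertices; without that hypothesis, the shortest $u$-to-$v$ path could wind through the rest of $G$, and we would need information beyond the edge weight to answer the query in $O(1)$ time.
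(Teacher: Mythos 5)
Your proof is correct and follows essentially the same route as the paper's: case~1 reduces to the precomputed values of Lemma~\ref{lem-crazy} (using that internal faces are triangles and invoking the generalized triangle inequality for the edge case), and case~2 is a constant-time lookup in the $18$-element set $Q_{F,F'}$. No differences worth noting.
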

\begin{proof} 
First assume that $F = F'$. If $u=v$, then $\dst(u,v) = 0$ and 
$\dst_B(u,v)$ has been precomputed. If $u \neq v$, then $(u,v)$ is an 
edge of $G$ and, thus, $\dst(u,v) = \omega(u,v)$ and $\dst_B(u,v)$ has 
been precomputed. 

Assume that $F \neq F'$. If we know the set $Q_{F,F'}$, then we can 
find $\dst(u,v)$ and $\dst_B(u,v)$ in $O(1)$ time, because these two 
distances are in $Q_{F,F'}$. 
\end{proof} 

In the rest of this section, we will show that Lemma~\ref{lemChazelle} 
can be used to compute the set $Q_{F,F'}$ for any two distinct internal 
faces $F$ and $F'$. 

\begin{lemma}  \label{lembasecase} 
For any edge $(F,F')$ of $D(G)$, the set $Q_{F,F'}$ can be 
computed in $O(1)$ time. 
\end{lemma}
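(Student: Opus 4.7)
The plan is to exploit the special structure of the dual edge: an edge $(F,F')$ of $D(G)$ corresponds to two triangular faces of $G$ that share a common edge. Writing this shared edge as $(a,b)$ and setting $F=\{a,b,c\}$, $F'=\{a,b,d\}$, the set $Q_{F,F'}$ consists of exactly $18$ tuples, namely one distance entry and one beer-distance entry for each of the nine ordered pairs $(u,v)\in\{a,b,c\}\times\{a,b,d\}$. Since this count is $O(1)$, it suffices to compute every tuple in $O(1)$ time.

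For eight of the nine pairs, this is immediate. When $u=v$, I take $\dst(u,v)=0$ and read $\dst_B(u,v)$ from the precomputation of Lemma~\ref{lem-crazy}(1). When $u\neq v$ and $(u,v)$ is an edge of $G$, the generalized triangle inequality gives $\dst(u,v)=\omega(u,v)$, and $\dst_B(u,v)$ is again supplied by Lemma~\ref{lem-crazy}(1). The only remaining pair is $(c,d)$, since $c$ and $d$ are generally not adjacent in $G$.

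The heart of the proof is the pair $(c,d)$. In the planar embedding, the edge $(a,b)$ separates $c$ from $d$, so every path in $G$ from $c$ to $d$ passes through $a$ or $b$; combined with the generalized triangle inequality this gives
\[
\dst(c,d)=\min\bigl(\omega(c,a)+\omega(a,d),\,\omega(c,b)+\omega(b,d)\bigr).
\]
For the beer distance I would split any shortest beer path from $c$ to $d$ at its first visit to some vertex $x\in\{a,b\}$ on the separating edge, and classify on which side of $x$ the beer store lies, to arrive at
\[
\dst_B(c,d)=\min\bigl(\dst_B(c,a)+\omega(a,d),\,\omega(c,a)+\dst_B(a,d),\,\dst_B(c,b)+\omega(b,d),\,\omega(c,b)+\dst_B(b,d)\bigr).
\]
Each of these four expressions is an upper bound on $\dst_B(c,d)$ by concatenating a beer path on one side of $x$ with the single edge on the other side; all quantities involved are either edge weights of $G$ or beer distances on edges of $G$, both available in $O(1)$ time from Lemma~\ref{lem-crazy}(1).

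The main obstacle, modest as it is, will be the matching lower bound in the $\dst_B(c,d)$ formula: one must argue that although a shortest beer path may be non-simple and may cross $(a,b)$ several times, fixing its first crossing vertex and the location of the beer store relative to that vertex yields a decomposition whose two halves are bounded below by the corresponding (beer) distances, with the ``non-beer'' half dominated by the edge weight $\omega(c,x)$ or $\omega(x,d)$ via the generalized triangle inequality.
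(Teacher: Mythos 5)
Your proposal is correct and follows essentially the same route as the paper: it handles the pairs with $u=v$ or $(u,v)\in E(G)$ via the precomputation of Lemma~\ref{lem-crazy}, and for the one remaining pair it uses the fact that the shared edge of $F$ and $F'$ separates the two faces, yielding exactly the same four-term minimum for the beer distance (the paper's $w,w'$ are your $a,b$). The only cosmetic difference is that the paper obtains $\dst(u,v)$ as the distance in the four-vertex induced subgraph $G[F,F']$ rather than writing out the two-term minimum explicitly, and the paper does not spell out the lower-bound direction that you rightly flag, though your sketch of it (split the path at a visit to $a$ or $b$ and place the beer store on one side) is the standard and correct completion.
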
 
\begin{proof} 
Let $u$ be a vertex of $F$ and let $v$ be a vertex of $F'$. Consider the 
subgraph $G[F,F']$ of $G$ that is induced by the four vertices of $F$ 
and $F'$; this subgraph has five edges. By the generalized triangle 
inequality, $\dst(u,v) = \dst(u,v,G[F,F'])$. Thus, $\dst(u,v)$ can be 
computed in $O(1)$ time. 

We now show how $\dst_B(u,v)$ can be computed in $O(1)$ time. 
If $u=v$ or $(u,v)$ is an edge of $G$, then $\dst_B(u,v)$ has been 
precomputed. Assume that $u \neq v$ and $(u,v)$ is not an edge of $G$. 
Let $w$ and $w'$ be the two vertices that are shared by $F$ and $F'$. 
Since any path in $G$ between $u$ and $v$ contains at least one of $w$ 
and $w'$, $\dst_B(u,v)$ is the minimum of 
\begin{enumerate}
\item $\dst_B(u,w) + \omega(w,v)$,
\item $\omega(u,w) + \dst_B(w,v)$, 
\item $\dst_B(u,w') + \omega(w',v)$, 
\item $\omega(u,w') + \dst_B(w',v)$. 
\end{enumerate}
Since $(u,w)$, $(w,v)$, $(u,w')$, and $(w',v)$ are edges of $G$, all 
terms in these four sums have been precomputed. Therefore, $\dst_B(u,v)$ 
can be computed in $O(1)$ time.

We have shown that each of the $18$ elements of $Q_{F,F'}$ can be 
computed in $O(1)$ time. Therefore, this entire set can be computed in 
$O(1)$ time.  
\end{proof} 

\begin{lemma}  \label{lemOplus} 
Let $F$, $F'$, and $F''$ be three pairwise distinct internal faces of
$G$, such that $F'$ is on the path in $D(G)$ between $F$ and $F''$.
If we are given the sets $Q_{F,F'}$ and $Q_{F',F''}$, then the set 
$Q_{F,F''}$ can be computed in $O(1)$ time. 
\end{lemma}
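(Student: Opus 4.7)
The plan is to reduce the computation of $Q_{F,F''}$ to taking $O(1)$ minima of sums of entries drawn from $Q_{F,F'}$ and $Q_{F',F''}$. The key geometric fact I will use is a separator property: every walk in $G$ from a vertex $u$ of $F$ to a vertex $v$ of $F''$ must pass through at least one vertex of $F'$.

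To establish this, I would exploit that $F'$ has exactly two neighbours on the $F$--$F''$ path in $D(G)$, one (call it $F_1'$) toward $F$ and one (call it $F_2'$) toward $F''$. The dual edges $(F',F_1')$ and $(F',F_2')$ correspond in $G$ to two distinct edges $e_1$ and $e_2$ of the triangle $F'$, whose endpoints together cover all three vertices of $F'$. In a maximal outerplanar graph, each internal edge is a chord of the outer Hamiltonian cycle whose two endpoints form a vertex separator: they split the remainder of $G$ into the two outerplanar pieces lying on either side of the chord in the planar embedding. Consequently, any $u$-to-$v$ walk is forced through an endpoint of $e_1$ (since $F$ and $F''$ lie on opposite sides of $e_1$) and, symmetrically, through an endpoint of $e_2$; in either case the walk touches a vertex of $F'$.

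From this I derive the combining identities, for $u$ a vertex of $F$ and $v$ a vertex of $F''$:
\[
\dst(u,v) \;=\; \min_{w \in F'} \bigl( \dst(u,w) + \dst(w,v) \bigr),
\]
\[
\dst_B(u,v) \;=\; \min_{w \in F'} \min\bigl( \dst(u,w) + \dst_B(w,v),\; \dst_B(u,w) + \dst(w,v) \bigr).
\]
The ``$\leq$'' inequalities follow by concatenating paths (and beer paths) at a chosen $w \in F'$. For the ``$\geq$'' inequalities I would take an optimal (beer) path from $u$ to $v$, pick a vertex $w$ of $F'$ that it visits (which exists by the separator property), and split at $w$; for the beer-path version, the beer store lies in one of the two halves, yielding the corresponding term in the inner minimum.

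Finally, each of the $18$ entries of $Q_{F,F''}$ is computed by one of these two identities; each computation is a minimum over the three choices $w \in F'$, and each term in the minimum is the sum of one entry of $Q_{F,F'}$ and one entry of $Q_{F',F''}$. Hence $Q_{F,F''}$ is assembled in $O(1)$ total time. The only subtlety I anticipate is the degenerate case in which $F$ or $F''$ is adjacent to $F'$ in $D(G)$ and $u$ or $v$ itself lies in $F'$; the identities remain valid in that case, using $\dst(x,x) = 0$ and the values $\dst_B(x,x)$ precomputed in Lemma~\ref{lem-crazy}.
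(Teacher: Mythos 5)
Your proposal is correct and follows essentially the same route as the paper: both rest on the fact that every $u$-to-$v$ walk must pass through a vertex of the separating face $F'$, and both then compute each of the $18$ entries of $Q_{F,F''}$ as a minimum over the three choices of $w \in F'$ of sums of entries from $Q_{F,F'}$ and $Q_{F',F''}$, using the split of the beer path into a beer half and a plain half. The only difference is that you spell out the chord-separator justification and the degenerate cases, which the paper leaves implicit.
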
 
\begin{proof}
Let $u$ be a vertex of $F$ and let $v$ be a vertex of $F''$. Since 
$G$ is an outerplanar graph, any path in $G$ between $u$ and $v$ 
must contain at least one vertex of $F'$. It follows that 
\[ \dst(u,v) = \min 
   \{ \dst(u,w) + \dst(w,v) \mid w \mbox{ is a vertex of } F' \} . 
\]
Thus, since $(u,w,\dst(u,w),\dd) \in Q_{F,F'}$ and 
$(w,v,\dst(w,v),\dd) \in Q_{F',F''}$, the value of $\dst(u,v)$ can be 
computed in $O(1)$ time. 

\begin{figure}[h]
  \centering
  \includegraphics[scale = 0.7]{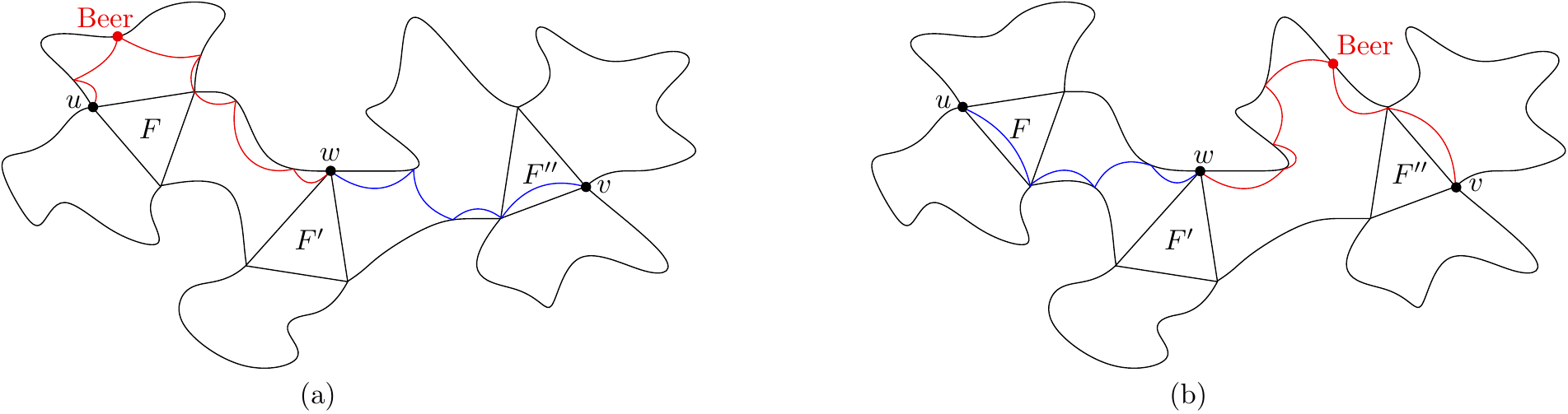}
  \caption{Any beer path from $u$ to $v$ contains at least one vertex 
      of $F'$. In (a), we consider the shortest beer path from $u$ to 
      $w$, followed by the shortest path from $w$ to $v$. 
      In (b), we consider the shortest path from $u$ to 
      $w$, followed by the shortest beer path from $w$ to $v$.}
  \label{fig:lemma10}
\end{figure}

By a similar argument, $\dst_B(u,v)$ is equal to (refer to 
Figure~\ref{fig:lemma10}) 
\[ \min \{ \min ( \dst_B(u,w) + \dst(w,v) , \dst(u,w) + \dst_B(w,v) ) : 
              \mbox{ $w$ is a vertex of $F'$}
        \} .  
\]
All values $\dst(u,w)$, $\dst(w,v)$, $\dst_B(u,w)$, and $\dst_B(w,v)$ 
are encoded in the sets $Q_{F,F'}$ and $Q_{F',F''}$. Therefore, we 
can compute $\dst_B(u,v)$ in $O(1)$ time. 

Thus, since each of the $18$ elements of $Q_{F,F''}$ can be computed 
in $O(1)$ time, the entire set can be computed in $O(1)$ time.  
\end{proof} 

We define 
\[ W = \{ Q_{F,F'} \mid \mbox{ $F$ and $F'$ are distinct internal faces 
          of $G$} \} \cup \{ \bot \} ,
\] 
where $\bot$ is a special symbol. We define the operator 
$\oplus : W \times W \rightarrow W$ in the following way.
\begin{enumerate}
\item If $F$ and $F'$ are distinct internal faces of $G$, then 
$Q_{F,F'} \oplus Q_{F,F'} = Q_{F,F'}$. 
\item If $F$, $F'$, and $F''$ are pairwise distinct internal faces of 
$G$ such that $F'$ is on the path in $D(G)$ between $F$ and $F''$, then 
$Q_{F,F'} \oplus Q_{F',F''} = Q_{F,F''}$. 
\item In all other cases, the operator $\oplus$ returns $\bot$.   
\end{enumerate}

It is not difficult to verify that $\oplus$ is associative, implying 
that $(W,\oplus)$ is a semigroup. 
By Lemma~\ref{lembasecase}, we can compute $Q_{F,F'}$ for all edges 
$(F,F')$ of $D(G)$, in $O(n)$ total time. 

Recall from Lemma~\ref{lemLCAetc} that, after an $O(n)$--time 
preprocessing, we can decide in $O(1)$ time, for any three internal 
faces $F$, $F'$, and $F''$ of $G$, whether $F'$ is on the path in 
$D(G)$ between $F$ and $F''$. Therefore, using Lemma~\ref{lemOplus}, 
the operator $\oplus$ takes $O(1)$ time to evaluate for any two 
elements of $W$.  

Finally, let $F$ and $F'$ be two distinct internal faces of $G$, and let 
$F = F_0 , F_1, F_2 , \ldots, F_k = F'$ be the path in $D(G)$ between 
$F$ and $F'$. Then $Q_{F,F'} = \oplus_{i=0}^{k-1} Q_{F_i,F_{i+1}}$. 
Thus, if we store with each edge of the tree $D(G)$, the corresponding 
element of the semigroup, then computing $Q_{F,F'}$ becomes a path-sum 
query as in Section~\ref{secpathsum}. 

To summarize, all conditions to apply Lemma~\ref{lemChazelle} are 
satisfied. As a result, we have proved Theorem~\ref{thmdq} for maximal 
outerplanar graphs that satisfy the generalized triangle inequality. 

\subsection{The Result in Theorem~\ref{thmdq} is Optimal}
In Section~\ref{secpathsum}, see also Remark~\ref{remark2}, we have seen
path-minimum queries in a tree, in which each edge $e$ stores a real
number $s(e)$. In such a query, we are given two distinct nodes $u$ and 
$v$, and have to return the smallest value $s(e)$ among all edges $e$ 
on the path between $u$ and $v$. Lemma~\ref{lemChazelle} gives a 
trade-off between the preprocessing and query times when answering such 
queries.

Let $D$ be an arbitrary data structure that answers beer distance
queries in any beer tree. Let $P(n)$, $S(n)$, and $Q(n)$ denote the
preprocessing time, space, and query time of $D$, respectively,
when the beer tree has $n$ nodes. We will show that $D$ can be used to
answer path-minimum queries.

Consider an arbitrary tree $T$ with $n$ nodes, such that each edge $e$
stores a real number $s(e)$. We may assume without loss of generality
that $0 < s(e) < 1$ for each edge $e$ of $T$.

By making an arbitrary node the root of $T$, the number of edges on the
path in $T$ between two nodes $u$ and $v$ is equal to
\[ \level(u) + \level(v) - 2 \cdot \level(\lca(u,v)) .
\]
Thus, by Lemma~\ref{lemLCAetc}, after an $O(n)$--time preprocessing,
we can compute the number of edges on this path in $O(1)$ time.

We create a beer tree $T'$ as follows. Initially, $T'$ is a copy of $T$.
For each edge $e=(u,v)$ of $T'$, we introduce a new node $x_e$ and
replace $e$ by two edges $(u,x_e)$ and $(v,x_e)$; we assign a weight
of $1$ to each of these two edges. In the current tree $T'$, none
of the nodes has a beer store. For every node $x_e$ in $T'$, we
introduce a new node $x'_e$, add the edge $(x_e,x'_e)$, assign a weight
of $s(e)$ to this edge, and make $x'_e$ a beer store. Finally, we
construct the data structure $D$ for the resulting beer tree $T'$.
Since $T'$ has $n+2(n-1) = 3n-2$ nodes, it takes $P(3n-2) + O(n)$ time
to construct $D$ from the input tree $T$. Moreover, the amount of
space used is $S(3n-2) + O(n)$.

Let $u$ and $v$ be two distinct nodes in the original tree $T$, let 
$\pi$ be the path in $T$ between $u$ and $v$, and let $\ell$ be the 
number of edges on $\pi$. The corresponding path $\pi'$ in $T'$ 
between $u$ and $v$ has weight $2 \ell$.

For any edge $e$ of $T$, let $\pi'_e$ be the beer path in $T'$ that
starts at $u$, goes to $x_e$, then goes to $x'_e$ and back to $x_e$, and
continues to $v$.

If $e$ is an edge of $\pi$, then the weight of $\pi'_e$ is equal to
$2 \ell + 2 \cdot s(e)$, which is less than $2 \ell + 2$. On the other 
hand, if $e$ is an edge of $T$ that is not on $\pi$, then the weight of
$\pi'_e$ is at least $2 \ell + 2 + 2 \cdot s(e)$, which is larger than 
$2 \ell + 2$. It follows that the shortest beer path in $T'$ between 
$u$ and $v$ visits the beer store $x'_e$, where $e$ is the edge on 
$\pi$ for which $s(e)$ is minimum.

Thus, by computing $\ell$ and querying $D$ for the beer distance in $T'$
between $u$ and $v$, we obtain the smallest value $s(e)$ among all edges
$e$ on the path in $T$ between $u$ and $v$. The query time is
$Q(3n-2) + O(1)$.

By combining this reduction with Remark~\ref{remark2}, it follows that
the result of Theorem~\ref{thmdq} is optimal.

\section{Reporting Shortest Beer Paths in Maximal Outerplanar Graphs}
\label{secJoyce} 

Let $G$ be a maximal outerplanar beer graph with $n$ vertices that 
satisfies the generalized triangle inequality. In this section, we show 
that, after an $O(n)$--time preprocessing, we can report, for any two 
query vertices $s$ and $t$, the shortest beer path $\spp_B(s,t)$ from 
$s$ to $t$, in $O(L)$ time, where $L$ is the number of vertices on 
this path. As before, $D(G)$ denotes the dual of $G$.  

\begin{observation} 
\label{obs:pathInD}
Let $v$ be a vertex of $G$. The faces of $G$ containing $v$ form 
a path of nodes in $D(G)$.
\end{observation}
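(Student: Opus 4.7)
The plan is to argue directly from the planar embedding of $G$, using the cyclic order of edges around $v$. Let $d$ denote the degree of $v$, and list the edges incident to $v$ as $e_1, e_2, \ldots, e_d$ in clockwise order around $v$. Because $v$ lies on the outer face (a Hamiltonian cycle, since $G$ is maximal outerplanar), precisely two of these edges, namely $e_1$ and $e_d$, are external, while the remaining $e_2, \ldots, e_{d-1}$ are internal diagonals.

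Next, I would observe that for each $1 \le i \le d-1$, the two consecutive edges $e_i$ and $e_{i+1}$ bound a unique internal face $F_i$ incident to $v$, and $F_i$ is a triangle by maximality of $G$. Every internal face containing $v$ arises in this way, so the set of internal faces containing $v$ is exactly $\{F_1, F_2, \ldots, F_{d-1}\}$. These faces are pairwise distinct, since they occupy distinct angular wedges at $v$.

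I then claim that for each $1 \le i \le d-2$, the faces $F_i$ and $F_{i+1}$ share the edge $e_{i+1}$ in $G$: both faces have $e_{i+1}$ on their boundary, and an internal edge bounds exactly two internal faces. By the definition of $D(G)$, this means $F_i$ and $F_{i+1}$ are adjacent in $D(G)$. Hence $F_1, F_2, \ldots, F_{d-1}$ forms a walk in $D(G)$; because $D(G)$ is a tree and the nodes in the walk are distinct, this walk is in fact a simple path.

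The only subtlety I expect is a degenerate case: if $d=2$, the sequence reduces to a single node $F_1$, which is trivially a path; and if $d=1$ (which cannot actually occur in a $2$-connected maximal outerplanar graph on $n \ge 3$ vertices), there would be no internal face containing $v$. Otherwise the argument is a direct consequence of the planar embedding together with the fact that in a maximal outerplanar graph each internal edge separates exactly two triangular faces.
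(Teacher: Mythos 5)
The paper states this as an \texttt{observation} and gives no proof at all, so there is no argument of the authors' to compare yours against; it is treated as self-evident. Your rotation-system argument is correct and complete --- consecutive edges in the clockwise order around $v$ delimit exactly the triangles incident to $v$, each consecutive pair of these triangles shares an internal edge and is therefore adjacent in $D(G)$, and a walk on pairwise distinct nodes of the tree $D(G)$ is a simple path --- so it supplies precisely the justification the paper omits.
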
 

Define $P_v$ to be the path in $D(G)$ formed by the faces of $G$ 
containing the vertex $v$. Let $G[P_v]$ be the subgraph of $G$ induced 
by the faces of $G$ containing $v$. Note that $G[P_v]$ has a fan shape. 
Let $\cw(v)$ denote the clockwise neighbor of $v$ in $G[P_v]$ and 
let $\ccw(v)$ denote the counterclockwise neighbor of $v$ in $G[P_v]$. 
We will refer to the clockwise path from $\cw(v)$ to $\ccw(v)$ in 
$G[P_v]$ as the $v$-\emph{chain} and denote it by $\rho_v$. 
(Refer to Figure~\ref{fig:v-chain}.) 

\begin{figure}[h]
  \centering
  \includegraphics[scale = 0.7]{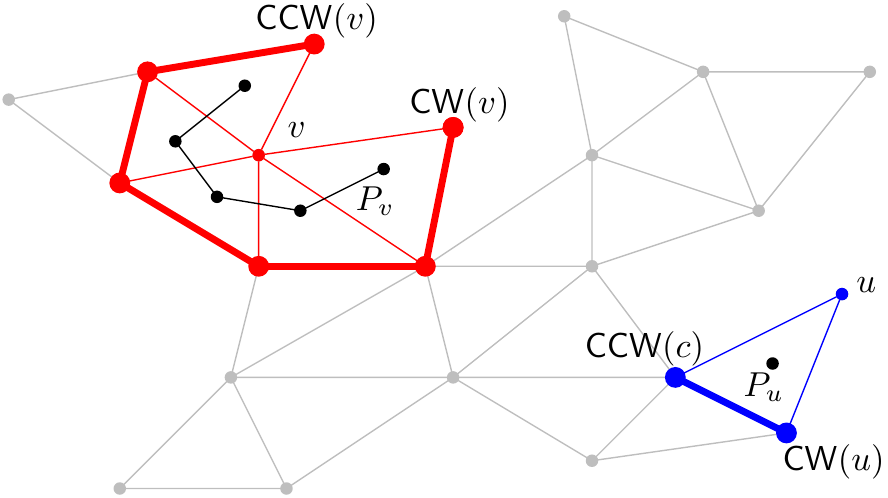}
  \caption{A maximal outerplanar graph $G$. The subgraphs $G[P_v]$ and 
    $G[P_u]$ are shown in red and blue, respectively. Both the $v$-chain 
    $\rho_v$ and the $u$-chain $\rho_u$ are shown in bold. Both paths 
    $P_v$ and $P_u$ are shown in black. Observe that $P_u$ is a single 
    node.}
  \label{fig:v-chain}
\end{figure}

\begin{lemma}
\label{shortestPathInAPath}
After an $O(n)$--time preprocessing, we can answer the following 
queries, for any three query vertices $v$, $u$, and $w$, such that both 
$u$ and $w$ are on the $v$-chain $\rho_v$: 
\begin{enumerate}
\item Report the weight $\dst(u,w,\rho_v)$ of the path from $u$ to 
$w$ along $\rho_v$ in $O(1)$ time. 
\item Report the path $\spp(u,w,\rho_v)$ from $u$ to $w$ along 
$\rho_v$ in $O(L)$ time, where $L$ is the number of vertices on 
this path. 
\end{enumerate}
\end{lemma}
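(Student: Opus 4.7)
The plan is to represent each chain $\rho_v$ explicitly as a doubly-linked list and to precompute prefix sums of edge weights along it, so that distance queries reduce to subtractions and path queries to list traversals.

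\textbf{Building the chains.} For each vertex $v$, the faces of $G$ containing $v$ form the subpath $P_v$ of $D(G)$ (Observation~\ref{obs:pathInD}), and the fan $G[P_v]$ exposes its neighbors in the cyclic order that defines $\rho_v$. I would traverse each $P_v$ once, reading off these neighbors into a doubly-linked list $L_v$ whose consecutive nodes correspond to consecutive vertices of $\rho_v$; at each node I would store the accumulated prefix sum $S_v(x_i) = \sum_{j<i} \omega(x_j, x_{j+1})$. Since every internal face has three vertices and is therefore touched at most three times across all such traversals, building all the $L_v$'s takes $O(n)$ time and uses $O(n)$ space.

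\textbf{Answering queries.} For every directed adjacency $(v,u)$ in $G$ I would store, at the corresponding edge record, a pointer to the list node representing $u$ inside $L_v$; since $G$ has $O(n)$ edges, this adds $O(n)$ space and is arranged during the construction above. On a query $(v,u,w)$, I follow the stored pointers to obtain $u$'s and $w$'s nodes of $L_v$ in $O(1)$ time. For part~(1), I return $|S_v(w) - S_v(u)|$, yielding $\dst(u,w,\rho_v)$. For part~(2), I compare $S_v(u)$ and $S_v(w)$ to decide the direction of travel, then walk the doubly-linked list from $u$'s node to $w$'s node emitting each vertex along the way; this visits exactly $L$ nodes and takes $O(L)$ time, producing $\spp(u,w,\rho_v)$.

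\textbf{Main obstacle.} The principal bookkeeping challenge is coordinating the construction so that the chains, the prefix sums, and the cross-pointers between each edge of $G$ and a list node in the appropriate $L_v$ are all produced in $O(n)$ time, even though a single vertex may participate in many chains. Because the edges of $\rho_v$ are in bijection with the internal faces that form $P_v$, the length of $P_v$ matches the length of $\rho_v$, and this bijection is what keeps the total work across all chains linear; each edge record of $G$ is written into exactly twice (once for each direction), which guarantees the cross-pointer setup stays within $O(n)$ time and space as well.
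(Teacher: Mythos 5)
Your proposal is correct and follows essentially the same approach as the paper: store prefix sums of edge weights along each chain $\rho_v$ so that $\dst(u,w,\rho_v)$ is a difference of two prefix sums, and observe that the total length of all chains is $O(n)$ (you count via faces, the paper counts via edges being in at most two chains, but it is the same bound). Your extra detail about cross-pointers from each edge $(v,u)$ to $u$'s node in $L_v$ just makes explicit the indexing the paper leaves implicit.
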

\begin{proof}
For any vertex $v$ and any vertex $u$ on $\rho_v$, we store the weight
of the path from $u$ to $\cw(v)$ along $\rho_v$. Observe that 
\[ \dst(u,w,\rho_v) = 
     | \dst(u,\cw(v),\rho_v) - \dst(w,\cw(v),\rho_v) | .  
\]
 
Any exterior edge in $G$ is in exactly one chain and any 
interior edge in $G$ is in exactly two chains. Thus, the sum of the number
of edges on each chain is proportional to the number of edges of $G$, which is 
$O(n)$. 
\end{proof}

\begin{lemma}
\label{lem:distInG(v)}
After an $O(n)$--time preprocessing, we can answer the following query 
in $O(1)$ time: Given three query vertices $v$, $u$, and $w$, such that 
both $u$ and $w$ are vertices of $G[P_v]$, report $\dst(u,w)$, i.e., 
the distance between $u$ and $w$ in $G$. 
\end{lemma}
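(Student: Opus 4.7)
The plan is to reduce a distance query to $O(1)$ lookups of precomputed quantities, using the fan structure of $G[P_v]$ together with the generalized triangle inequality. The preprocessing runs Lemma~\ref{shortestPathInAPath} so that $\dst(u,w,\rho_v)$ can be answered in $O(1)$ time, and in addition stores all edge weights of $G$ in an $O(n)$-size table permitting $O(1)$ lookup of $\omega(v,u)$ for any edge $(v,u)$. Since the query vertex $u$ lies in $V(G[P_v])=\{v\}\cup\rho_v$, the value $\omega(v,u)$ is always available this way whenever $u\neq v$.

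For a query on $(v,u,w)$, I would return $\omega(v,w)$ if $u=v$ and $\omega(v,u)$ if $w=v$. In the remaining case $u$ and $w$ both lie on $\rho_v$, and the answer is $\min\bigl(\omega(v,u)+\omega(v,w),\,\dst(u,w,\rho_v)\bigr)$. Each of these is $O(1)$ from the preprocessing, so the overall query time is $O(1)$.

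The correctness argument has two parts. First, I would establish that $\dst(u,w)=\dst(u,w,G[P_v])$: in the planar embedding, the complement of $G[P_v]$ decomposes into sub-polygons, each attached to $G[P_v]$ along a single chain edge $(x_i,x_{i+1})$, so any maximal sub-walk of a shortest $u$-$w$ path in $G$ that leaves $V(G[P_v])$ enters and exits through the endpoints of some such edge. By the generalized triangle inequality this sub-walk has weight at least $\omega(x_i,x_{i+1})$ and can be replaced by that edge without increasing the total weight, yielding a path entirely in $G[P_v]$ of weight at most $\dst(u,w)$. Second, inside the fan $G[P_v]$, a shortest walk from $u$ to $w$ (with $u,w\in\rho_v$) either stays on $\rho_v$, in which case its weight is at least $\dst(u,w,\rho_v)$, or visits $v$; in the latter case, applying the generalized triangle inequality triangle by triangle to the triangles $(v,x_p,x_{p+1})$ collapses the walk to $u\to v\to w$ of weight $\omega(v,u)+\omega(v,w)$.

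The main obstacle is the first part of the correctness argument, which relies on the structural observation that in a maximal outerplanar graph the only way to escape the fan $G[P_v]$ and return is through a single chain edge, so that the generalized triangle inequality can be applied to flatten any such detour; once that observation is in place, the remainder is a straightforward case analysis in the fan combined with the lookups enabled by Lemma~\ref{shortestPathInAPath}.
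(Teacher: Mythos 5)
Your proposal is correct and follows essentially the same route as the paper: the identical case analysis ($u=v$ or $w=v$ gives an edge weight; otherwise return $\min(\dst(u,w,\rho_v),\,\omega(u,v)+\omega(v,w))$ using Lemma~\ref{shortestPathInAPath}), with your two-part argument merely spelling out the appeal to the generalized triangle inequality that the paper leaves implicit. The only nit is the degenerate case $u=w$ (handled explicitly as case 1 in the paper), which your formula still gets right when $u=w\neq v$ but which you should state separately to cover $u=w=v$.
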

\begin{proof}
We get the following cases; the correctness follows from the 
generalized triangle inequality:
\begin{enumerate}
\item If $u=w$ then $\dst(u,w)=0$.
\item If $u=v$ then $(u,w)$ is an edge and we return $\omega(u,w)$. 
Similarly if $w=v$, we return $\omega(u,w)$.
\item Otherwise $u$ and $w$ are both on $\rho_v$ and we return 
$\min(\dst(u,w,\rho_v), \omega(u,v)+\omega(v,w))$. 
\end{enumerate}
\end{proof}

\begin{lemma}
\label{lem:SPInG(Pv)}
After an $O(n)$--time preprocessing, we can report, for any three 
vertices $v$, $u$, and $w$, such that both $u$ and $w$ are vertices of 
$G[P_v]$, $\spp(u,w)$ in $O(L)$ time, where $L$ is the number of vertices 
on the path. 
\end{lemma}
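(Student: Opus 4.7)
The plan is to turn the case analysis used in the proof of Lemma~\ref{lem:distInG(v)} into a reporting algorithm, emitting the edges of whichever candidate path achieved the minimum. The preprocessing will be exactly the one from Lemmas~\ref{shortestPathInAPath} and~\ref{lem:distInG(v)}, so no new data structures are needed.

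First I would handle the trivial cases. If $u = w$, simply return the single vertex $u$. If $u = v$ or $w = v$, then by the generalized triangle inequality the shortest path is the edge $(u,w)$, which is present in $G[P_v]$ since $v$ is adjacent to every other vertex of $G[P_v]$; return this edge in $O(1)$ time.

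In the remaining case, both $u$ and $w$ lie on the $v$-chain $\rho_v$. Using Lemma~\ref{shortestPathInAPath}(1), I would compute $\dst(u,w,\rho_v)$ in $O(1)$ time, and also the two-edge detour weight $\omega(u,v)+\omega(v,w)$ in $O(1)$ time (the edges $(u,v)$ and $(v,w)$ exist because $u,w \in G[P_v]$). I then compare the two values. If $\omega(u,v)+\omega(v,w) \le \dst(u,w,\rho_v)$, I report the path $u,v,w$ in $O(1) = O(L)$ time. Otherwise, I invoke Lemma~\ref{shortestPathInAPath}(2) to report $\spp(u,w,\rho_v)$ in $O(L)$ time. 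Correctness follows immediately from Lemma~\ref{lem:distInG(v)}, which guarantees that the true shortest path in $G$ between $u$ and $w$ is one of these two candidates.

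There is essentially no obstacle here beyond bookkeeping; all the heavy lifting was done by the preceding lemmas. The only point that needs a brief justification is that in the final case we are entitled to output \emph{any} path realising $\dst(u,w)$, so ties between the chain path and the two-edge detour can be broken arbitrarily. The running time is $O(1)$ for the comparison plus $O(L)$ for emitting the chosen path, matching the claimed bound.
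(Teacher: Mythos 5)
Your proposal is correct and follows essentially the same route as the paper: use the $O(1)$ distance comparison from Lemma~\ref{lem:distInG(v)} to decide whether the shortest path is the two-edge detour $(u,v,w)$ or the chain path, and in the latter case emit the path via Lemma~\ref{shortestPathInAPath}. The only difference is that you spell out the trivial cases $u=w$, $u=v$, $w=v$ explicitly, which the paper leaves implicit.
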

\begin{proof}
Using Lemma~\ref{lem:distInG(v)}, we can determine in $O(1)$ if the 
shortest path from $u$ to $w$ goes through $v$ or follows the $v$-chain
$\rho_v$. (Refer to Figure~\ref{fig:twoCases}). If it goes 
through $v$, then $\spp(u,w)=(u,v,w)$. Otherwise, 
$\spp(u,w)$ takes the path along $\rho_v$ and by 
Lemma~\ref{shortestPathInAPath}, we can find this path in $O(L)$ time. 
\end{proof}

\begin{figure}[h]
  \centering
  \includegraphics[scale = 0.5]{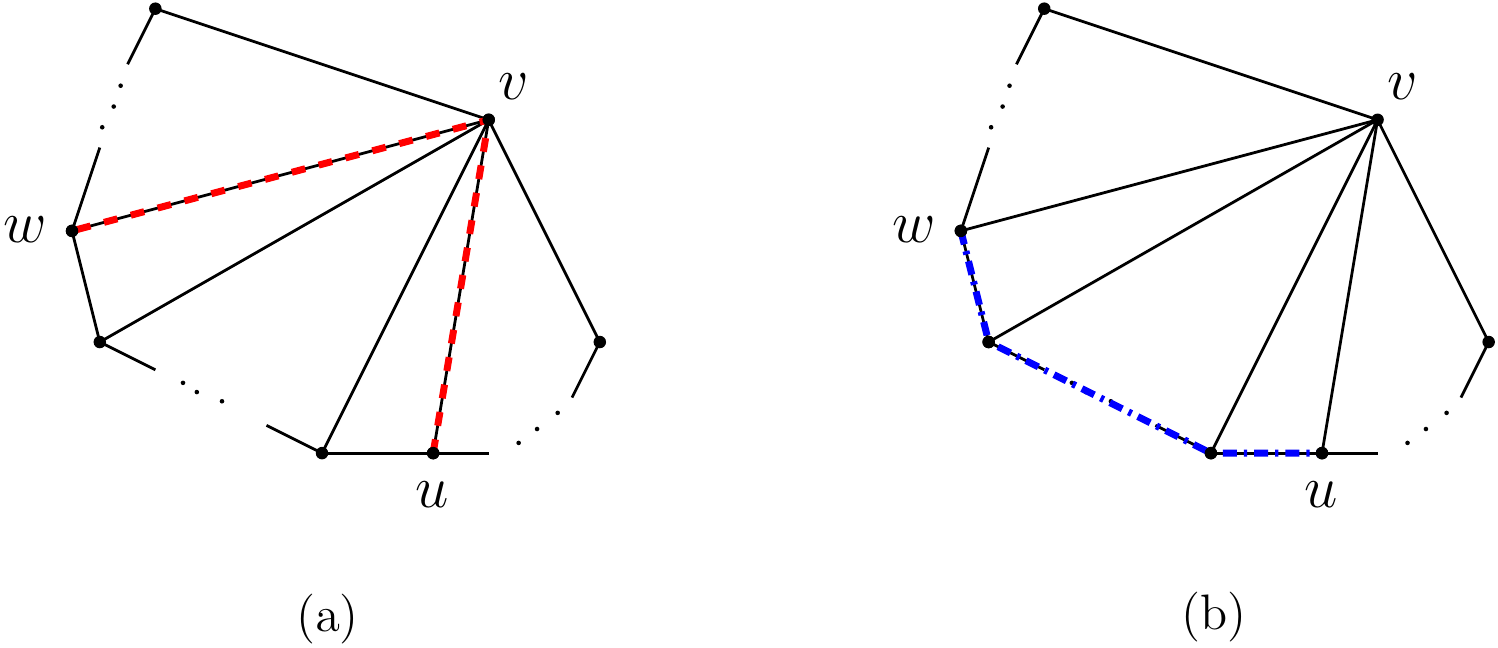}
  \caption{Two possible cases for the shortest path between $u$ and $w$: (a) it goes through vertex $v$ (shown in dashed red), or (b) it goes through the vertices of the $v$-chain between $u$ and $w$ (shown in dashed blue).}
  \label{fig:twoCases}
\end{figure}

\begin{lemma}
\label{lem:beerDistInG(v)}
After an $O(n)$--time preprocessing, we can report, for any three 
vertices $v$, $u$ and $w$, such that both $u$ and $w$ are vertices of 
$G[P_v]$, the beer distance $\dst_B(u,w)$ in $O(1)$ time. The 
corresponding shortest beer path $\spp_B(u,w)$ can be reported in 
$O(L)$ time, where $L$ is the number of vertices on the path.
\end{lemma}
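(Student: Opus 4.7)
The plan is to mirror Lemma~\ref{lem:SPInG(Pv)}: split the optimum beer path from $u$ to $w$ into the case that visits the apex $v$ and the case that stays off $v$, and resolve each case with $O(1)$ lookups. The preprocessing invokes Lemma~\ref{lem-crazy} to tabulate $\dst_B(x,y)$ on every edge $(x,y)$ and $\dst_B(x,x)$ on every vertex, and invokes Lemma~\ref{shortestPathInAPath} so that $\dst(u,w,\rho_v)$ and $\spp(u,w,\rho_v)$ are available in $O(1)$ and $O(L)$ time. On top of this, for every vertex $v$ I treat $\rho_v$ as a path array and build a range-minimum (with arg-min) data structure over the edge penalties $\delta(x,y) := \dst_B(x,y)-\omega(x,y)$, which are non-negative by the generalized triangle inequality. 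Because $\sum_v |V(\rho_v)| = \sum_v \deg(v) = O(n)$, all of this fits in $O(n)$ time and space. At query time, the cases $u=w$, $u=v$, $w=v$, and $(u,w)\in E(G)$ are handled by a single lookup from Lemma~\ref{lem-crazy}; otherwise $u$ and $w$ are non-consecutive vertices of $\rho_v$, and I return $\min(\dst_B^A,\dst_B^B)$ where
\[
\dst_B^A := \min\!\bigl(\omega(u,v)+\dst_B(v,w),\; \dst_B(u,v)+\omega(v,w)\bigr),
\qquad
\dst_B^B := \dst(u,w,\rho_v) + \min_{e \in \rho_v[u,w]} \delta(e).
\]

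Proving that $\min(\dst_B^A,\dst_B^B) = \dst_B(u,w)$ is the main obstacle. The $\ge$ direction is routine: $\dst_B^A$ is the cost of the beer path $u\to v\to w$ (with the beer visit in one half), and each term $\dst(u,w,\rho_v)+\delta((x_i,x_{i+1}))$ in $\dst_B^B$ is the cost of the concrete beer path that follows $\rho_v$ from $u$ to $x_i$, substitutes $\spp_B(x_i,x_{i+1})$ for the edge $(x_i,x_{i+1})$, and follows $\rho_v$ from $x_{i+1}$ to $w$. The $\le$ direction will use the pocket structure of $G$: because $G$ is maximal outerplanar, the edges of $\rho_v$ partition $V(G)\setminus(\{v\}\cup V(\rho_v))$ into \emph{pockets}, one per internal edge, where the pocket of the edge $(x_i,x_{i+1})$ is the component of $G\setminus(\{v\}\cup E(\rho_v))$ hanging off that edge. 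For an optimum beer path $\pi$ that avoids $v$, the beer store lies either on $\rho_v$ or in exactly one pocket, so $\pi$ decomposes through a unique distinguished edge $e^*$ of $\rho_v$, and the inequality $\dst_B(x_i,x_{i+1})\le \dst_B(x_i,x_i)+\omega(x_i,x_{i+1})$ absorbs any round-trip-at-a-vertex detour into the edge penalty. When the beer store lies outside $\rho_v[u,w]$, forcing a round trip past $u$ or $w$, the same argument with $e^*$ taken to be the $\rho_v[u,w]$ edge incident to the relevant endpoint yields $|\pi|\ge\dst_B^B$.

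For path reporting, I first identify which case is optimal. In Case~A the output is $\spp_B(u,v)$ followed by the edge $(v,w)$ (or the symmetric choice), delivered by Lemma~\ref{lem-crazy} in $O(L)$ time. In Case~B a range-arg-min query returns the optimal index $i^*$, and I output the concatenation $\spp(u,x_{i^*},\rho_v) \cdot \spp_B(x_{i^*},x_{i^*+1}) \cdot \spp(x_{i^*+1},w,\rho_v)$ using Lemmas~\ref{shortestPathInAPath} and~\ref{lem-crazy} for a total of $O(L)$ time.
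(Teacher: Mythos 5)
Your proposal is correct and follows essentially the same route as the paper's proof: precompute edge and self beer distances via Lemma~\ref{lem-crazy}, build per-chain range-minimum structures over the penalties $\dst_B(u_i,u_{i+1})-\omega(u_i,u_{i+1})$, and at query time take the minimum of the through-$v$ case and the along-$\rho_v$ case, reconstructing the path by splicing one $\spp_B(u_i,u_{i+1})$ into the chain path. Your pocket-based justification of why the chain case reduces to a single distinguished edge is merely a more explicit version of the argument the paper states in one sentence.
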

\begin{proof}
Recall from Lemma~\ref{lem-crazy} that we can compute $\dst_B(u,v)$ for 
every edge $(u,v)$ in $G$, and $\dst_B(v,v)$ for every vertex $v$ in $G$, 
in $O(n)$ time. 

Let $\rho_v=(\cw(v)=u_1, u_2, \dots, u_N=\ccw(v))$. Let $A_v[~]$ be 
an array of size $N-1$. For $i=1,\ldots,N-1$, we set 
$A_v[i]=\dst_B(u_i,u_{i+1})- \omega(u_i,u_{i+1})$. Recall that by 
the generalized triangle inequality, 
$\omega(u_i,u_{i+1})=\dst(u_i,u_{i+1})$. Therefore, $A[i]$ holds the 
difference between the weights of the shortest path from $u_i$ to 
$u_{i+1}$ and the shortest beer path from $u_i$ to $u_{i+1}$. After 
preprocessing the array $A_v[~]$ in $O(N)$ time, we can conduct range 
minimum queries in $O(1)$ time. 
(Bender and Farach-Colton~\cite{bf-lpr-00} show that these queries are 
equivalent to $\lca$-queries in the Cartesian tree of the array.) 
Thus, for each $v$-chain of $N$ nodes, we spend $O(N)$ time processing 
the $v$-chain. Since every edge is in at most two chains, processing all 
$v$-chains takes $O(n)$ time and space.

Given two vertices $u$ and $w$ of $G[P_v]$, we determine the beer 
distance $\dst_B(u,w)$ as follows:
\begin{enumerate}
\item If $u=w$  then $\dst_B(u,w)$ has already been computed by 
Lemma~\ref{lem-crazy}.
\item If $u=v$ or $w=v$, then there is an edge from $v$ to the other 
vertex. Thus, $\dst_B(u,w)$ has already been computed by 
Lemma~\ref{lem-crazy}.
\item Otherwise, $u$, $w$ and $v$ are three distinct vertices. Assume 
without loss of generality that $w$ is clockwise from $u$ on the 
$v$-chain. We take the minimum of the following two cases:
	\begin{enumerate}
	\item The shortest beer path from $u$ to $w$ that goes through 
            $v$. Since a beer store must be visited before or after 
            $v$, this beer path has a weight of 
            $\min(\dst_B(u,v)+\omega(v,w),\omega(u,v)+\dst_B(v,w))$.
	 \item The shortest beer path through the vertices of the 
            $v$-chain. Note that this beer path will visit each vertex 
            on the $v$-chain between $u$ and $w$, but may go off the 
            $v$-chain to visit a beer store. On $\spp_B(u,w)$, there is 
            one pair of vertices, $u_i$ and $u_{i+1}$, such that a 
            beer path is taken between $u_i$ and $u_{i+1}$, and $u_i$ 
            and $u_{i+1}$ are adjacent on the $v$-chain; refer to 
            Figure~\ref{fig:v-pathWithBeer}. 
	    The shortest path is taken between all other pairs of 
            adjacent vertices on the $v$-chain. From
	    Lemma~\ref{shortestPathInAPath}, we can compute 
            $\dst(u,w,\rho_v)$ in $O(1)$ time. The shortest beer path 
            through the vertices of the $v$-chain has a weight of 
	    $\dst(u,w,\rho_v)+A_v[i]$, where $A_v[i]$ is the additional 
            distance needed to visit a beer store between $u_i$ and 
            $u_{i+1}$. Let $u$ be the $j^{th}$ vertex on $\rho_v$
	    and let $w$ be the $k^{th}$ vertex in $\rho_v$. Then 
            $A_v[i]$ is the minimum value in $A_v[j, \dots, k-1]$. 
            We can determine $A_v[i]$ in constant time using a range 
            minimum query.
	\end{enumerate} 
\end{enumerate}
Note that in case 1 and case 2, $\spp_B(u,w)$ can be constructed in $O(L)$ time by 
Lemma \ref{lem-crazy}. For case 3 (a) let $p=(u,v,w)$ and for case 3 (b) let $p=\spp(u,w,\rho_v)$. 
Let $u_i$, $u_{i+1}$ be the pair of adjacent vertices 
on $p$ between which a beer path was taken. Using Lemma \ref{lem-crazy} we can find 
$\spp_B(u_i, u_{i+1})$ in $O(L)$ time. We obtain $\spp_B(u,w)$ by replacing  
the edge $(u_i, u_{i+1})$ in $p$ with  $\spp_B(u_i, u_{i+1})$.
\end{proof}

\begin{figure}[h]
  \centering
  \includegraphics[scale = 0.7]{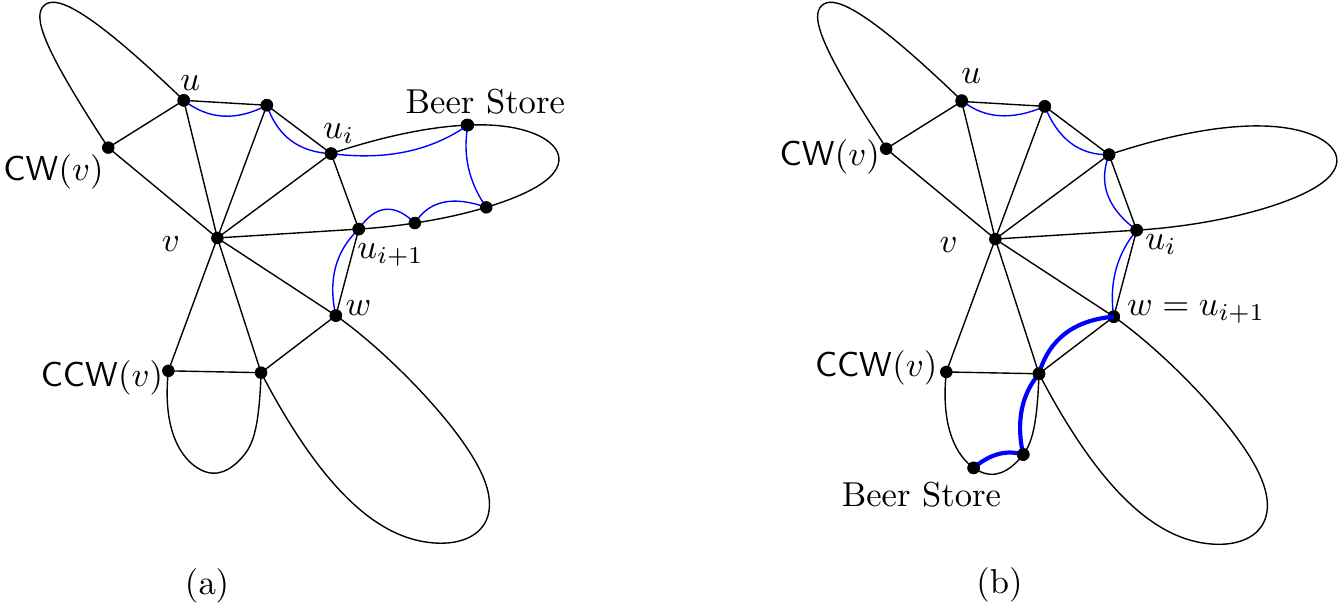}
  \caption{Both figures show a shortest beer path from $u$ to $w$ through the vertices on the $v$-chain. Thicker edges on the blue beer path are edges that are traversed twice; once in each direction.}
  \label{fig:v-pathWithBeer}
\end{figure}

\subsection{Answering Shortest Beer Path Queries}
Recall that, for any vertex $v$ of $G$, $P_v$ denotes the path in $D(G)$ 
formed by the faces of $G$ containing $v$. Moreover, $G[P_v]$ denotes 
the subgraph of $G$ induced by these faces. 

Consider two query vertices $s$ and $t$ of $G$. Our goal is to compute 
the shortest beer path $\spp_B(s,t)$. 

Let $F_s$ and $F_t$ be arbitrary faces containing $s$ and $t$, 
respectively. 
If $t$ is in $G[P_s]$ then, by Lemma \ref{lem:beerDistInG(v)}, 
we can construct $\spp_B(s,t)$ in $O(L)$ time. For the remainder of 
this section, we assume that $t$ is not in $G[P_s]$. To find 
$\spp_B(s,t)$, we start by constructing a directed acyclic graph (DAG), 
$H$. In this DAG, vertices will be arranged in columns of constant 
size, and all edges go from left to right between vertices in adjacent 
columns. In $H$, each column will contain one vertex that is on 
$\spp_B(s,t)$. First we will construct $H$ and then we will show how 
we can use $H$ to construct $\spp_B(s,t)$. 
The entire construction is illustrated in Figure~\ref{fig:dag}. 

\begin{observation}
\label{oneVertexOnPath}
Any interior edge $(a,b)$ of $G$ splits $G$ into two subgraphs such 
that if $s$ is in one subgraph and $t$ is in the other, then any path 
in $G$ from $s$ to $t$ must visit at least one of $a$ and $b$. 
\end{observation}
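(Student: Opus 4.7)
The plan is to exploit the structure of maximal outerplanar graphs, in which the outer face is a Hamiltonian cycle and every interior edge is a chord of that cycle. Once we view $(a,b)$ as a chord, the claim becomes a standard planarity argument: the chord, together with the two arcs of the outer cycle that it determines, bounds two regions whose intersection is exactly the edge $(a,b)$.

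More concretely, I would let $C$ be the Hamiltonian cycle formed by the outer face of $G$. Since $(a,b)$ is an internal edge, $a$ and $b$ are non-adjacent on $C$, so removing $a$ and $b$ from $C$ leaves two disjoint arcs $A_1$ and $A_2$. Define $V_1 = V(A_1) \cup \{a,b\}$ and $V_2 = V(A_2) \cup \{a,b\}$, and let $G_1$ and $G_2$ be the subgraphs of $G$ induced by $V_1$ and $V_2$ respectively. I would then observe that every vertex of $G$ lies in $V_1 \cup V_2$ (because $C$ is Hamiltonian) and that $V_1 \cap V_2 = \{a,b\}$. The key geometric step is to verify that no edge of $G$ has one endpoint in $V_1 \setminus \{a,b\}$ and the other in $V_2 \setminus \{a,b\}$: such an edge would be another chord crossing $(a,b)$ in the outerplane embedding, contradicting planarity.

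With that structural separation in hand, the observation is immediate. Suppose $s \in V_1 \setminus \{a,b\}$ and $t \in V_2 \setminus \{a,b\}$, and let $\pi = (s = x_0, x_1, \ldots, x_k = t)$ be any path in $G$ from $s$ to $t$. Let $i$ be the smallest index with $x_i \notin V_1 \setminus \{a,b\}$; such an $i$ exists since $t \notin V_1 \setminus \{a,b\}$. By the minimality of $i$, the edge $(x_{i-1}, x_i)$ has its first endpoint in $V_1 \setminus \{a,b\}$, so by the previous paragraph $x_i$ cannot lie in $V_2 \setminus \{a,b\}$; hence $x_i \in \{a,b\}$.

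The main (and really only) obstacle is the planarity step asserting that no edge straddles the chord $(a,b)$. I expect to dispatch it in one sentence by appealing to the outerplane embedding: if some edge $e$ had one endpoint in the interior of $A_1$ and the other in the interior of $A_2$, then $e$ and $(a,b)$ would be two chords with interleaving endpoints on $C$, forcing them to cross in any planar embedding with all vertices on the outer face, which is impossible. Everything else is bookkeeping.
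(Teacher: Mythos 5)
Your proof is correct: the chord $(a,b)$ separates the Hamiltonian outer cycle into two arcs, outerplanarity forbids any edge straddling the chord, and the first-exit argument on the path is sound. The paper states this as an observation without proof, and your argument is precisely the standard separation property the authors are implicitly relying on, so there is nothing to compare beyond noting that you have supplied the details they omitted.
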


Let $P$ be the unique path between $F_s$ and $F_t$ in $D(G)$. Consider 
moving along $P$ from $F_s$ to $F_t$. Let $F_1$ be the node on 
$P_s$ that is closest to $F_t$, and let $F'_1$ be the successor 
of $F_1$ on $P$. Note that, by Lemmas~\ref{lemLCAetc} 
and~\ref{lem-last-label}, we can find $F_1$ and $F'_1$ in $O(1)$ 
time.\footnote{To apply Lemma~\ref{lem-last-label}, we consider each 
vertex of $G$ to be a colour. For each vertex $v$ of $G$, the 
$v$-coloured path in the tree $D(G)$ is the path $P_v$. The face $F_1$ 
is the answer to the closest-colour query with nodes $F_s$ and $F_t$
and colour $s$.} 
Let  $e_1=(a_1,b_1)$ be the edge in $G$ shared by the faces $F_1$ 
and $F'_1$. Since $\spp_B(s,t)$ must visit both of these faces, by 
Observation \ref{oneVertexOnPath}, at least one of $a_1$ or $b_1$ is 
on the shortest beer path. 

We place $s$ in the first column of $H$ and $a_1$ and $b_1$ in the 
second column of $H$. We then add two directed edges from $s$ to $a_1$, 
one with weight $\dst(s,a_1)$ and the other with weight $\dst_B(s,a_1)$. 
Similarly, we add two directed edges from $s$ to $b_1$ with weights 
$\dst(s,b_1)$ and $\dst_B(s,b_1)$. 

\begin{figure}[h]
  \centering
  \includegraphics[scale = 0.7]{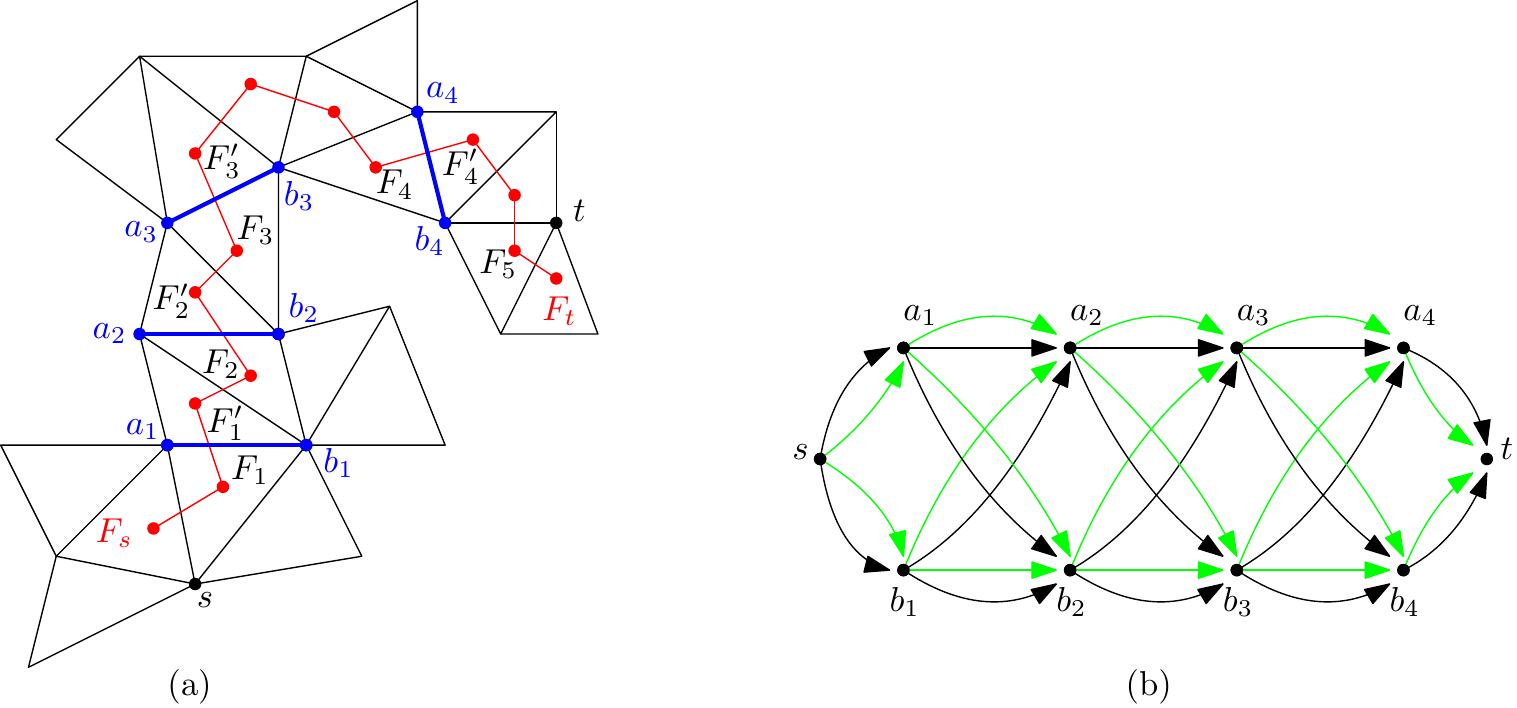}
  \caption{An outerplanar graph $G$ (a) and the DAG $H$ constructed 
   for the shortest beer path query from $s$ to $t$ (b). The path $P$ 
   from $F_s$ to $F_t$ is shown in red. Each edge $e_i=(a_i,b_i)$ such 
   that $e_i$ is shared by $F_i$ and $F'_i$ is shown in blue. The green 
   edges of $H$ represent the beer edges.}
  \label{fig:dag}
\end{figure}

When $i \geq 2$ we construct the $(i+1)^{th}$ column of $H$ in the 
following way. Let $e_{i-1}=(a_{i-1},b_{i-1})$ be the edge shared by 
the faces $F_{i-1}$ and $F'_{i-1}$. The $i^{th}$ column of $H$ contains 
the vertices $a_{i-1}$ and $b_{i-1}$. Note that $F'_{i-1}$ is in both 
$P_{b_{i-1}}$ and $P_{a_{i-1}}$. Using Lemma~\ref{lem-last-label}, we 
find the node $F_{i}^b$ on $P_{b_{i-1}}$ that is closest to $F_t$. 
If the vertex $a_{i-1}$ is not in $F_{i}^b$, then we let 
$F_{i}=F_{i}^b$. Otherwise, we let $F_{i}$ be the node on $P_{a_{i-1}}$ 
that is closest to $F_t$. 

If $t$ is not a vertex of $F_{i}$, then let $F'_i$ be the node that 
follows $F_i$ on $P$; we find $F'_i$ using Lemma~\ref{lemLCAetc}. 
Let $e_ i=(a_ i,b_ i)$ be the edge of $G$ shared by the faces $F_i$ 
and $F'_i$. 
In the $(i+1)^{th}$ column, we place $a_i$ and $b_i$. 
For each $u\in \{a_{i-1},b_{i-1}\}$ and each $v \in \{a_i,b_i\}$ we add two 
directed edges $(u,v)$ to the DAG, one with weight $\dst(u,v)$ and the 
other with weight $\dst_B(u,v)$. If $F_i$ is in $P_{a_{i-1}}$, all these 
vertices are in $G[P_{a_{i-1}}]$; otherwise, $F_i$ is in $P_{b_{i-1}}$, 
and all these vertices are in $G[P_{b_{i-1}}]$. Thus, by 
Lemmas~\ref{lem:distInG(v)} and~\ref{lem:beerDistInG(v)}, we can find 
the distances and beer distances to 
assign to these edges in constant time. 

If $t$ is in $F_{i}$, then in the $(i+1)^{th}$ column we only place 
the vertex $t$. In this case, for each  $u\in \{a_{i-1},b_{i-1}\}$, 
we add two directed edges $(u,t)$ to the DAG with weights $\dst(u,t)$ 
and $\dst_B(u,t)$. At this point we are done constructing $H$.

We define a \emph{beer edge} to be an edge of $H$ that was assigned 
a weight of a beer path during the construction of $H$. We find the 
beer distance from $s$ to $t$ in $G$ using the following dynamic 
programming approach in $H$. 

Let $M$ denote the number of columns in $H$. For $i=3,\dots ,M$ and 
for all $u$ in the $i^{th}$ column of $H$, compute
\[ \dst_B(s,u) = \min 
        \begin{cases}
           \dst_B(s,a_{i-2})+\dst(a_{i-2},u), \\
           \dst(s,a_{i-2})+\dst_B(a_{i-2},u), \\
           \dst_B(s,b_{i-2})+\dst(b_{i-2},u), \\
           \dst(s,b_{i-2})+\dst_B(b_{i-2},u)
        \end{cases} 
\]
and
\[ \dst(s,u) = \min 
        \begin{cases}
           \dst(s,a_{i-2})+\dst(a_{i-2},u), \\
           \dst(s,b_{i-2})+\dst(b_{i-2},u). \\
        \end{cases}
\] 

The vertices $a_{i-2}$ and $b_{i-2}$ occur in the $(i-1)^{th}$ column. 
Thus, $\dst_B(s,a_{i-2})$, $\dst_B(s,b_{i-2})$, $\dst(s,a_{i-2})$, and  
$\dst(s,b_{i-2})$ will be computed before computing the values for the
$i^{th}$ column. We get $\dst(a_{i-2},u)$, $\dst_B(a_{i-2},u)$, 
$\dst(b_{i-2},u)$ and $\dst_B(b_{i-2},u)$ from the weights of the 
DAG-edges between the $(i-1)^{th}$ and $i^{th}$ columns of $H$. 

By keeping track of which expression produced $\dst_B(s,u)$ and 
$\dst(s,u)$, we can backwards reconstruct the shortest beer path 
in the DAG. Knowing the shortest beer path in the DAG enables us to
construct the corresponding beer path in $G$ as follows.
\begin{enumerate}
\item Define $P_{st}$ to be an empty path.
\item For each edge $(w,v)$ of the shortest beer path in the DAG.
	\begin{enumerate}
	\item  If $(w,v)$ was a beer edge, let $P_{wv}=\spp_B(w,v)$, 
	which can be constructed in time proportional to its number 
        of vertices via Lemma \ref{lem:beerDistInG(v)}.
	\item  Otherwise, let $P_{wv}=\spp(w,v)$ which can be 
	constructed in time proportional to its number of vertices as seen in
	Lemma \ref{lem:SPInG(Pv)}.
	\end{enumerate} 
Let $P_{st}=P_{st} \cup P_{wv}$.
\item Return $P_{st}$, which is equal to $\spp_B(w,v)$.
\end{enumerate}

Let $L$ denote the number of vertices on $\spp_B(s,t)$. 
In order for the above query algorithm to take $O(L)$ time, the size of 
the DAG must be $O(L)$. The following three lemmas will show this to be 
true. 

\begin{lemma}\label{lem-a-or-b}
For $2\leq i<M-1$, $F_{i}$ contains either $a_{i-1}$ or $b_{i-1}$, 
but not both.   
\end{lemma}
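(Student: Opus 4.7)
The plan is to prove the lemma in two halves: a short verification that $F_i$ always contains at least one of $a_{i-1}, b_{i-1}$, followed by a contradiction argument for the ``not both'' part, which is where the real work lies.

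The easy half is immediate from the construction. The algorithm assigns $F_i$ to be either $F_i^b$, the node on $P_{b_{i-1}}$ closest to $F_t$, or else the node on $P_{a_{i-1}}$ closest to $F_t$ (call this $F_i^a$). In the first case $b_{i-1}$ is a vertex of $F_i$, and in the second $a_{i-1}$ is, so at least one endpoint of $e_{i-1}$ is always a vertex of $F_i$.

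For the hard half, I would suppose toward a contradiction that $F_i$ contains both $a_{i-1}$ and $b_{i-1}$. Then $F_i$ is a triangle of $G$ incident to $e_{i-1} = (a_{i-1}, b_{i-1})$, and the only two such triangles are $F_{i-1}$ and $F_{i-1}'$, so $F_i \in \{F_{i-1}, F_{i-1}'\}$. Both of these faces lie on $P_{a_{i-1}}$ and on $P_{b_{i-1}}$. The central observation is that $F_{i-1}'$ is the successor of $F_{i-1}$ on the unique tree path $P$ from $F_s$ to $F_t$ in $D(G)$, so $F_{i-1}'$ is exactly one edge closer to $F_t$ in $D(G)$ than $F_{i-1}$ is. Therefore the node on $P_{a_{i-1}}$ (respectively $P_{b_{i-1}}$) closest to $F_t$ is never $F_{i-1}$. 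Since we are assuming $a_{i-1} \in F_i$, Case~1 of the construction (which requires $a_{i-1} \notin F_i^b$) is ruled out, so we must be in Case~2, forcing $F_i = F_i^a$; by the same observation applied to $F_i^b$, we get $F_i^a = F_i^b = F_{i-1}'$.

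It remains to contradict $F_i = F_{i-1}'$. Because $i < M-1$, the vertex $t$ is not in $F_i$, so $F_{i-1}' \neq F_t$ and $F_{i-1}'$ has a successor $F^*$ on $P$. Writing the vertices of the triangle $F_{i-1}'$ as $a_{i-1}, b_{i-1}, c$, its edges other than $e_{i-1}$ are $(a_{i-1}, c)$ and $(b_{i-1}, c)$, so whichever of them $F^*$ shares with $F_{i-1}'$, the face $F^*$ contains $a_{i-1}$ or $b_{i-1}$ and therefore lies on $P_{a_{i-1}}$ or $P_{b_{i-1}}$. But $F^*$ is strictly closer to $F_t$ in $D(G)$ than $F_{i-1}'$, contradicting that $F_{i-1}'$ is the node of that path closest to $F_t$. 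The main obstacle is cleanly translating the ``closest node on $P_v$ to $F_t$'' output of Lemma~\ref{lem-last-label} into the statement ``no later face on $P$ contains $v$''; once that bridge is built, the successor $F^*$ of $F_{i-1}'$ on $P$ immediately breaks the hypothetical scenario in which $F_i$ contains both endpoints.
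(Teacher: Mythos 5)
Your proof is correct and follows essentially the same route as the paper's: a contradiction argument driven by the successor face on $P$, which must contain $a_{i-1}$ or $b_{i-1}$ while lying strictly closer to $F_t$ in $D(G)$, violating the defining property of $F_i^a$ or $F_i^b$. The only difference is organizational: you first pin down $F_i^a = F_i^b = F_{i-1}'$ and derive one unified contradiction from its successor, whereas the paper case-splits on which edge $F_i$ shares with the next face on $P$; both are sound.
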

\begin{proof}
Recall that we defined $F_i^b$ to be the last node on $P$ that is also on 
$P_{b_{i-1}}$. We similarly define $F_i^a$ to be the last node on $P$ 
that is also on $P_{a_{i-1}}$. From the way we choose $F_i$, $F_i$ is 
either $F_i^b$ or $F_i^a$. We only choose $F_{i}=F_i^b$ after having 
checked that $a_{i-1}$ is not in $F_i^b$; thus in this case we can be 
sure that $F_{i}$  only contains $b_{i-1}$.

Assume for the purpose of contradiction that we choose $F_i=F_i^a$ and 
$b_{i-1}$ is also in $F_i$. Let the third vertex of $F_i$ be $c$. Let 
the face on $P$ immediately following $F_i$ be $F_i'$. The edge shared 
by $F_i$ and $F'_i$ is either $(b_{i-1},c)$ or $(a_{i-1},c)$. If 
$(b_{i-1},c)$ is the shared edge, then $F_i'$ is a face closer to 
$F_t$ that contains $b_{i-1}$ and not $a_{i-1}$, so we would have 
chosen $F_i=F_i^b$, which is a contradiction. Otherwise, $(a_{i-1},c)$ 
is the edge shared by $F_i$ and $F_i'$, which implies that there is 
a face containing $a_{i-1}$ closer to $F_t$ in $P$ than $F_i^a$, 
which contradicts the definition of $F_i^a$.
\end{proof} 

\begin{lemma} \label{lem-unique}
Every vertex of $G$ appears in at most one column of $H$.
\end{lemma}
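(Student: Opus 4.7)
The plan is to show that the vertex sets of any two distinct columns of $H$ are disjoint, handling the boundary cases involving column~$1$ (containing $s$) and column~$M$ (containing $t$) as warmups and devoting the real work to two intermediate columns, where Observation~\ref{obs:pathInD} combined with the closest-colour characterisation of $F_i$ will do the job.

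For the vertex $s$: since $F_1$ is chosen as the node on $P_s$ closest to $F_t$, no face on $P$ strictly after $F_1$ contains $s$. Each intermediate column $k+1$ (for $1 \leq k \leq M-2$) consists of the endpoints $\{a_k, b_k\}$ of $e_k$, both of which lie in $F'_k$, and $F'_k$ is strictly after $F_1$ on $P$. Hence $s$ cannot appear in any intermediate column, and $s \neq t$ because $t \notin G[P_s]$ by our standing assumption. For the vertex $t$: the algorithm populates a column with $\{a_k, b_k\}$ only along the branch where $t \notin F_k$; since $a_k, b_k$ are both vertices of $F_k$, this forces $t \neq a_k, b_k$, so $t$ cannot appear in any intermediate column either.

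The main case is two intermediate columns $i$ and $j$ with $1 < i < j < M$. Suppose some vertex $v$ appears in both. Then $v$ is an endpoint of both $e_{i-1}$ and $e_{j-1}$, hence a vertex of all four faces $F_{i-1}, F'_{i-1}, F_{j-1}, F'_{j-1}$, which lie on $P$ in this order. By Observation~\ref{obs:pathInD}, the faces of $G$ containing $v$ form a path $P_v$ in $D(G)$, so $P_v$ must contain the entire subpath of $P$ from $F_{i-1}$ to $F'_{j-1}$; in particular, $v \in F_i$. Lemma~\ref{lem-a-or-b} applies at index $i$ (valid because $2 \leq i \leq j-1 \leq M-2$) and tells us that $F_i$ contains exactly one of $a_{i-1}, b_{i-1}$; call it $x_{i-1}$. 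Since $v \in F_i \cap \{a_{i-1}, b_{i-1}\}$, this forces $v = x_{i-1}$. But the algorithm defines $F_i$ via a closest-colour query with colour $x_{i-1}$ and target $F_t$, whose answer is the last face on $P$ containing $x_{i-1} = v$. Since $F'_{j-1}$ lies strictly after $F_i$ on $P$ (as $F'_{j-1} > F_{j-1} \geq F_i$) and also contains $v$, we reach a contradiction. The hard part will be the conceptual step of identifying $v$ with the surviving vertex $x_{i-1}$ of Lemma~\ref{lem-a-or-b}; once that is in hand, the ``last face on $P$'' property of $F_i$ immediately collides with the existence of $F'_{j-1}$ later on $P$, and the minor bookkeeping is just to confirm that the hypothesis of Lemma~\ref{lem-a-or-b} holds and that the $F_k$'s and $F'_k$'s strictly interleave along $P$.
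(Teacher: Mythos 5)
Your proof is correct and uses the same ingredients as the paper's own argument: Lemma~\ref{lem-a-or-b}, the characterisation of $F_i$ as the last face on $P$ containing the surviving endpoint of $e_{i-1}$, and the path property of Observation~\ref{obs:pathInD}, together with the same boundary treatment of $s$ and $t$. The only difference is organisational --- you derive a contradiction from a vertex occurring in two columns, while the paper argues forward that each of $a_{i-1}$, $b_{i-1}$ cannot reappear in any later column --- so the substance is essentially identical.
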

\begin{proof} 
Since $(a_1,b_1)$ is an edge shared by both the last face of $P$ containing $s$ 
and the first face of $P$ that does not contain $s$ it is not possible for either of these 
vertices to be the vertex $s$. Thus, $s$ will only be 
represented by the vertex in the first column of $H$. By stopping
the construction of $H$ as soon as we add a vertex representing $t$, we ensure 
that $H$ only contains one vertex corresponding to the vertex $t$ in $G$.

For $2\leq i \leq M-2$, consider the vertex $a_{i-1}$ in $G$ represented by a 
vertex in the $i^{th}$ column of $H$. If $F_i=F_i^a$ then by definition of 
$F_i^a$, $F'_i$ does not contain $a_{i-1}$. Since $(a_i,b_i)$ is an edge of $F'_i$,
$a_i \neq a_{i-1}$ and $b_i \neq a_{i-1}$. Because the face $F'_i$ is closer 
to $F_t$ than $F_i^a$, $a_{i-1}$ is not a vertex on any of the faces on the 
path from $F'_i$ to $F_t$. Thus, subsequent columns of $H$ will not contain vertices  
representing the vertex $a_{i-1}$ in $G$.

If $ F_i=F_i^b$ then by Lemma \ref{lem-a-or-b}, $a_{i-1}$ is not in $F_i$ and since 
$(a_i,b_i)$ is an edge of $F_i$, $a_i \neq a_{i-1}$ and $b_i \neq a_{i-1}$.
Because $F_i$ is a face on $P$ closer to $F_t$ than $F_{i-1}$ (a face that 
contains $a_{i-1}$) it follows from Observation  \ref{obs:pathInD}  that none of 
the faces on $P$ from $F_{i-1}$ to $F_t$ will have the vertex $a_{i-1}$ on their face and, 
thus, $a_{i-1}$ will not be represented by vertices in subsequent columns of $H$. 

By switching the roles of $a_{i-1}$ with $b_{i-1}$ in the above reasoning we can 
see that this also holds for $b_{i-1}$.
\end{proof} 

\begin{lemma}
The number of vertices and edges of $H$ is $O(L)$.
\end{lemma}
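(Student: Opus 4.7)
The plan is to bound the number $M$ of columns of $H$ by $L+O(1)$, and then to observe that the per-column sizes are constant so the totals follow. Each column of $H$ holds either one vertex (the first column holds $s$, the last holds $t$) or two vertices ($a_{i-1}$ and $b_{i-1}$), and between any two consecutive columns there are at most $2\cdot 2 = 4$ vertex pairs, each carrying one ``distance'' edge and one ``beer distance'' edge, so at most $8$ directed edges. Hence $|V(H)| \le 2M$ and $|E(H)| \le 8(M-1)$, and it suffices to show $M \le L + O(1)$.

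The key step is the column-to-path charging. For each $i$ with $1 \le i \le M-2$, the edge $e_i=(a_i,b_i)$ is shared by the two internal faces $F_i$ and $F'_i$ that are consecutive on the tree-path $P$ in $D(G)$ from $F_s$ to $F_t$, so $e_i$ is an interior edge of $G$. Removing $e_i$ separates $G$ into two outerplanar subgraphs, one containing $s$ (the side of $P$ up to and including $F_i$) and the other containing $t$ (the side from $F'_i$ onward). I would invoke Observation~\ref{oneVertexOnPath} on each of the two simple sub-paths of $\spp_B(s,t)$ (the one from $s$ to the beer store and the one from the beer store to $t$): since $s$ and $t$ lie on opposite sides of $e_i$, at least one of these two simple paths must cross the cut $e_i$ and therefore visit a vertex of $\{a_i,b_i\}$. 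Thus every column index $i \in \{1,\dots,M-2\}$ supplies a vertex of $\spp_B(s,t)$.

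Next I would use Lemma~\ref{lem-unique} to turn this charging into a bound. Since the vertices $\{a_i,b_i\}$ appear in the $(i{+}1)$-st column, Lemma~\ref{lem-unique} implies that the chosen representatives for different indices $i$ are pairwise distinct vertices of $G$. Because each of them lies on $\spp_B(s,t)$, the number of these distinct representatives is at most the number of distinct vertices of $\spp_B(s,t)$, which is at most $L$. Therefore $M - 2 \le L$, that is $M \le L + 2$. Combining with the per-column and between-column size bounds gives $|V(H)| + |E(H)| = O(M) = O(L)$, as required.

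The main obstacle, and the only part that is not immediate from the cited observations, is justifying that the non-simple walk $\spp_B(s,t)$ really must visit a vertex of $\{a_i,b_i\}$ even though Observation~\ref{oneVertexOnPath} is stated only for simple paths; the fix is the decomposition of $\spp_B(s,t)$ into two simple paths meeting at a beer store and applying the observation to whichever of the two simple pieces crosses the cut defined by $e_i$. Everything else reduces to the bookkeeping of counting vertices and edges of $H$ per column.
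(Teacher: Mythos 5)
Your proof is correct and follows essentially the same route as the paper: the paper's one-line argument also bounds the number of columns by $L$ via Observation~\ref{oneVertexOnPath} (each cut edge $e_i$ forces a vertex of $\{a_i,b_i\}$ onto $\spp_B(s,t)$) together with Lemma~\ref{lem-unique} (these representatives are pairwise distinct), and then uses the constant per-column vertex and edge counts. Your extra care about applying the observation to the non-simple beer walk via its decomposition into two simple paths is a harmless refinement of the same argument.
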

 \begin{proof}
By Observation~\ref{oneVertexOnPath} and Lemma~\ref{lem-unique}, 
the number of columns of $H$ is at most $L$. Since each column has 
at most two vertices, each of which having at most four outgoing edges,  
the total number of vertices and edges of $H$ is $O(L)$. 
\end{proof}

Observe that the total preprocessing time is $O(n)$. For two query vertices 
$s$ and $t$, the DAG, $H$, can be constructed in $O(L)$ time. 
Finally, the dynamic programming algorithm on $H$ takes $O(L)$ time. 
Thus, we have proved Theorem~\ref{thm2} for maximal outerplanar graphs
that satisfy the generalized triangle inequality. 

\section{Proof of Lemma~\ref{lem-crazy}} 
\label{app:lem-crazy} 

Let $G$ be a maximal outerplanar beer graph with $n$ vertices that 
satisfies the generalized triangle inequality. We will first show how to 
compute $\dst_B(u,u)$ for each vertex $u$ of $G$, and $\dst_B(u,v)$ for 
each edge $(u,v)$ of $G$. Consider again the dual $D(G)$ of $G$. We choose 
an arbitrary face of $G$ and make it the root of $D(G)$. 

Let $(u,v)$ be any edge of $G$. This edge divides $G$ into two outerplanar 
subgraphs, both of which contain $(u,v)$ as an edge. Let $G_{uv}^R$ be 
the subgraph that contains the face represented by the root of $D(G)$,
and let $G_{uv}^{\neg R}$ denote the other subgraph. Note that if 
$(u,v)$ is an external edge, then $G_{uv}^R = G$ and $G_{uv}^{\neg R}$ 
consists of the single edge $(u,v)$. 

By the generalized triangle inequality, the shortest beer path between 
$u$ and $v$ is completely in $G_{uv}^R$ or completely in 
$G_{uv}^{\neg R}$. The same is true for the shortest beer path from $u$ 
to itself. This implies: 

\begin{observation} 
For each edge $(u,v)$ of $G$,
\begin{enumerate} 
\item $\dst_B(u,v) = \min \left( \dst_B(u,v,G_{uv}^R) , 
                                 \dst_B(u,v,G_{uv}^{\neg R}) \right)$,  
\item $\dst_B(u,u) = \min \left( \dst_B(u,u,G_{uv}^R) , 
                                 \dst_B(u,u,G_{uv}^{\neg R}) \right)$,  
\item $\dst_B(v,v) = \min \left( \dst_B(v,l,G_{uv}^R) , 
                                 \dst_B(v,v,G_{uv}^{\neg R}) \right)$,  
\end{enumerate} 
\end{observation}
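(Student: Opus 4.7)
My plan is to establish each of the three identities by proving the two inequalities separately. The ``$\leq$'' direction is immediate: for any subgraph $H \in \{G_{uv}^R, G_{uv}^{\neg R}\}$ containing the relevant vertices, every beer walk in $H$ is also a beer walk in $G$, so $\dst_B(x,y) \leq \dst_B(x,y,H)$, and taking the minimum over the two choices of $H$ preserves the inequality. All the work is in the ``$\geq$'' direction, namely that the global optimum is attained inside one of the two pieces. Because $G_{uv}^R$ and $G_{uv}^{\neg R}$ intersect exactly in the edge $(u,v)$ and its endpoints $\{u,v\}$, any walk that visits vertices in both pieces must cross between them through $u$ or $v$.

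The key technical step I would prove first is the following ``side-confinement'' lemma: for any vertex $b \in G_{uv}^R$ and any $x \in \{u,v\}$, there is a shortest path from $x$ to $b$ in $G$ that stays entirely inside $G_{uv}^R$ (and symmetrically for $G_{uv}^{\neg R}$). To see this, take an arbitrary shortest $x$-to-$b$ path $\pi$ in $G$ and consider its maximal subpaths that lie in $G_{uv}^{\neg R}\setminus\{u,v\}$. Each such ``excursion'' enters and leaves $G_{uv}^{\neg R}$ at a vertex of $\{u,v\}$, so its endpoints are either (i) the same vertex of $\{u,v\}$, in which case the excursion is a closed walk of strictly positive weight and can be deleted to obtain a strictly shorter $x$-to-$b$ walk (contradicting the fact that $\pi$ is shortest, or allowing replacement with no loss), or (ii) the two distinct vertices $u$ and $v$, in which case the excursion has weight at least $\dst(u,v)=\omega(u,v)$ by the generalized triangle inequality and can be replaced by the edge $(u,v)\in G_{uv}^R$ without increasing the total weight. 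Iterating this surgery produces a path of the same length lying entirely in $G_{uv}^R$.

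With the side-confinement lemma in hand, the three identities follow from the decomposition of beer paths noted earlier in the paper: any shortest beer walk from $x$ to $y$ (with $x,y\in\{u,v\}$, or $x=y=u$, or $x=y=v$) splits into a shortest path from $x$ to some beer store $b$ followed by a shortest path from $b$ to $y$. The beer store $b$ lies in at least one of $G_{uv}^R$ or $G_{uv}^{\neg R}$; without loss of generality say $b\in G_{uv}^R$ (which also covers the degenerate cases $b\in\{u,v\}$, since $u$ and $v$ belong to both subgraphs). Applying the lemma to both halves of the decomposition yields a beer walk of the same weight lying entirely in $G_{uv}^R$. Hence $\dst_B(x,y)\geq \dst_B(x,y,G_{uv}^R)\geq \min\bigl(\dst_B(x,y,G_{uv}^R),\dst_B(x,y,G_{uv}^{\neg R})\bigr)$, completing the proof of (1), (2), and (3).

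The main obstacle I anticipate is bookkeeping in the surgery argument for the side-confinement lemma, specifically making sure that after shortcutting a cycle excursion or replacing a crossing excursion with the edge $(u,v)$, the remaining walk really is an $x$-to-$b$ walk of weight no greater than the original and that repeated surgeries terminate with a walk contained in $G_{uv}^R$. Everything else is a routine application of the decomposition of beer paths into two ordinary shortest paths through a beer store, together with the trivial direction $\dst_B(x,y)\leq \dst_B(x,y,H)$.
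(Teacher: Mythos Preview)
Your proposal is correct and follows the same idea as the paper. In the paper this observation is not given a separate proof; it is preceded by the single sentence ``By the generalized triangle inequality, the shortest beer path between $u$ and $v$ is completely in $G_{uv}^R$ or completely in $G_{uv}^{\neg R}$. The same is true for the shortest beer path from $u$ to itself,'' and the observation is then stated as an immediate consequence. Your side-confinement lemma and surgery argument are precisely the unpacking of that sentence: every excursion of a shortest path across the separator $\{u,v\}$ either closes up (and can be deleted) or runs from $u$ to $v$ (and can be shortcut by the edge $(u,v)$ using $\dst(u,v)=\omega(u,v)$), so an optimal beer path can be rerouted entirely into one side. The only methodological difference is cosmetic: you first decompose the beer path into two ordinary shortest paths through a beer store and confine each half, whereas the paper asserts confinement of the whole beer path directly; both reductions rest on the same use of the generalized triangle inequality.
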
 

Thus, it suffices to first compute $\dst_B(u,v,G_{uv}^{\neg R})$,  
$\dst_B(u,u,G_{uv}^{\neg R})$, and $\dst_B(v,v,G_{uv}^{\neg R})$ 
for all edges $(u,v)$, and then compute $\dst_B(u,v,G_{uv}^R)$,
$\dst_B(u,u,G_{uv}^R)$, and $\dst_B(v,v,G_{uv}^R)$, again for all edges 
$(u,v)$.

\subsection{Recurrences for $\dst_B(u,v,G_{uv}^{\neg R})$, 
$\dst_B(u,u,G_{uv}^{\neg R})$, and $\dst_B(v,v,G_{uv}^{\neg R})$} 

Let $(u,v)$ be an edge of $G$. Item 1.\ below presents the base cases,
whereas item 2.\ gives the recurrences. 

\begin{figure}[h]
  \centering
  \includegraphics[scale = 0.5]{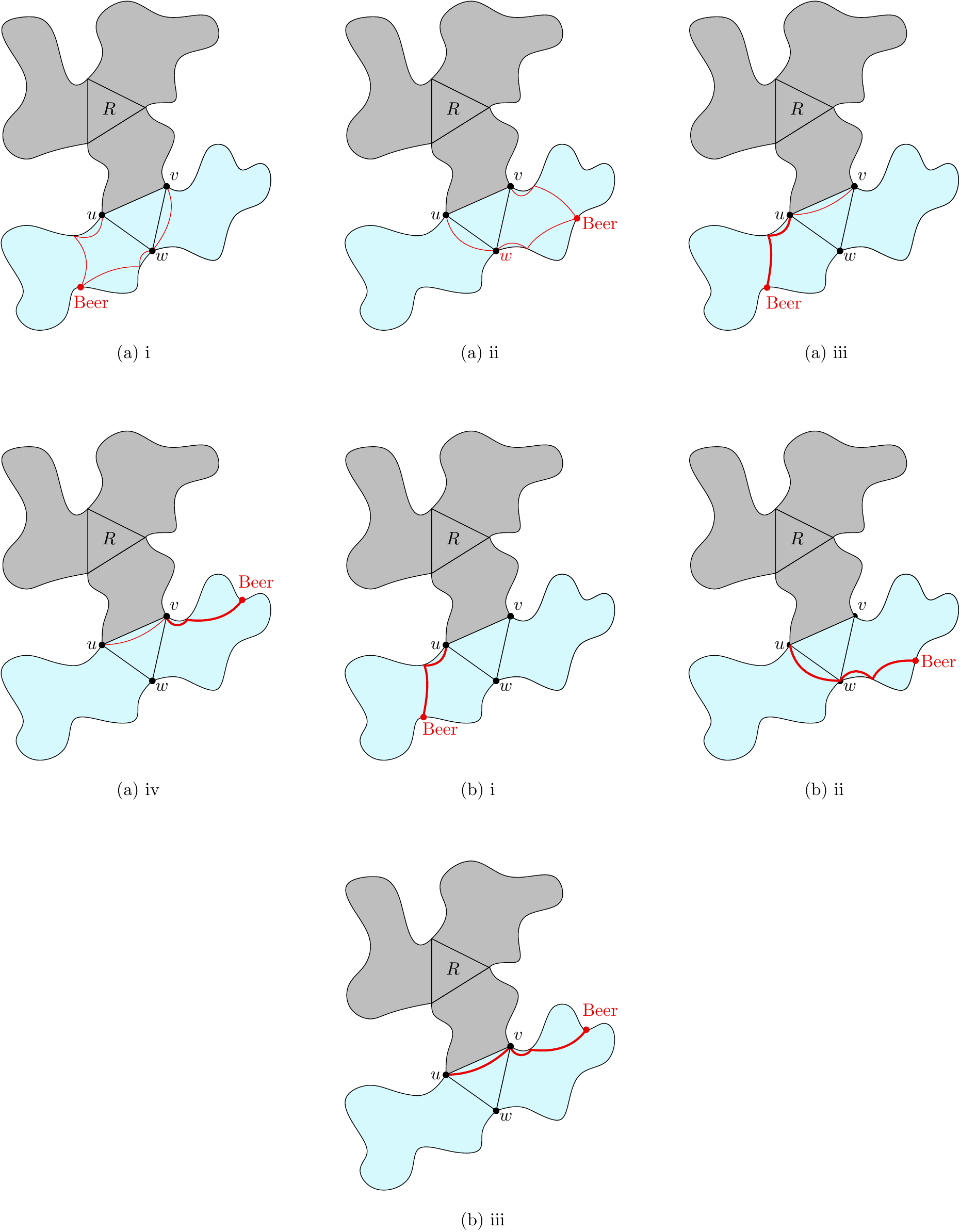}
  \caption{Illustrating the post-order traversal for all cases in 
           item 2.}
  \label{fig:post}
\end{figure}

\begin{enumerate}
\item Assume that $(u,v)$ is an external edge of $G$. 
      \begin{enumerate}
      \item If both $u$ and $v$ are beer stores, then 
            $\dst_B(u,v,G_{uv}^{\neg R}) = \omega(u,v)$, 
            $\dst_B(u,u,G_{uv}^{\neg R}) = 0$, and  
            $\dst_B(v,v,G_{uv}^{\neg R}) = 0$. 
      \item If exactly one of $u$ and $v$, say $u$, is a beer store, then 
            $\dst_B(u,v,G_{uv}^{\neg R}) = \omega(u,v)$, 
            $\dst_B(u,u,G_{uv}^{\neg R}) = 0$, and  
            $\dst_B(v,v,G_{uv}^{\neg R}) = 2 \cdot \omega(u,v)$. 
      \item If neither $u$ nor $v$ is a beer store, then 
            $\dst_B(u,v,G_{uv}^{\neg R}) = \infty$, 
            $\dst_B(u,u,G_{uv}^{\neg R}) = \infty$, and 
            $\dst_B(v,v,G_{uv}^{\neg R}) = \infty$. 
      \end{enumerate}
\item Assume that $(u,v)$ is an internal edge of $G$. Let $w$ be the 
      third vertex of the face of $G_{uv}^{\neg R}$ that contains $(u,v)$ 
      as an edge. All possible cases are illustrated in 
      Figure~\ref{fig:post}.
      \begin{enumerate}
      \item The value of $\dst_B(u,v,G_{uv}^{\neg R})$ is the minimum of 
            \begin{enumerate}
            \item $\dst_B(u,w,G_{uw}^{\neg R}) + \omega(w,v)$, 
            \item $\omega(u,w) + \dst_B(w,v,G_{vw}^{\neg R})$, 
            \item $\dst_B(u,u,G_{uw}^{\neg R}) + \omega(u,v)$, 
            \item $\omega(u,v) + \dst_B(v,v,G_{vw}^{\neg R})$. 
            \end{enumerate} 
      \item The value of $\dst_B(u,u,G_{uv}^{\neg R})$ is the minimum of 
            \begin{enumerate}
            \item $\dst_B(u,u,G_{uw}^{\neg R})$, 
            \item $2 \cdot \omega(u,w) + \dst_B(w,w,G_{vw}^{\neg R})$, 
            \item $2 \cdot \omega(u,v) + \dst_B(v,v,G_{vw}^{\neg R})$. 
            \end{enumerate} 
            The value of $\dst_B(v,v,G_{uv}^{\neg R})$ is obtained by 
            swapping $u$ and $v$ in i., ii., and iii. 
      \end{enumerate}
\end{enumerate} 

These recurrences express $\dst_B(u,v,G_{uv}^{\neg R})$, 
$\dst_B(u,u,G_{uv}^{\neg R})$, and $\dst_B(v,v,G_{uv}^{\neg R})$ in 
terms of values that are ``further down'' in the tree $D(G)$. Therefore, 
by performing a post-order traversal of $D(G)$, we obtain all these 
values, for all edges $(u,v)$ of $G$, in $O(n)$ total time.

\subsection{Recurrences for $\dst_B(u,v,G_{uv}^R)$, $\dst_B(u,u,G_{uv}^R)$, 
and $\dst_B(v,v,G_{uv}^R)$} 

Let $(u,v)$ be an edge of $G$. Item 1.\ below presents the base cases,
whereas item 2.\ gives the recurrences. 

\begin{enumerate}
\item Assume that $(u,v)$ is an edge of the face representing the root 
      of $D(G)$. Let $w$ be the third vertex of this face. 
      \begin{enumerate}
      \item The value of $\dst_B(u,v,G_{uv}^R)$ is the minimum of 
            \begin{enumerate}
            \item $\dst_B(u,u,G_{uw}^{\neg R}) + \omega(u,v)$, 
            \item $\dst_B(u,w,G_{uw}^{\neg R}) + \omega(w,v)$, 
            \item $\omega(u,v) + \dst_B(v,v,G_{vw}^{\neg R})$,
            \item $\omega(u,w) + \dst_B(w,v,G_{vw}^{\neg R})$.  
            \end{enumerate}
      \item The value of $\dst_B(u,u,G_{uv}^R)$ is the minimum 
            of\footnote{We do not have to consider 
            $W := \omega(u,v) + \dst_B(v,w,G_{vw}^{\neg R}) + 
            \omega(w,u)$, because the sum of the values in ii.\ and iii.\ 
            is at most $2W$. Therefore, the smaller of the values in ii.\ 
            and iii.\ is at most $W$.}  
            \begin{enumerate}
            \item $\dst_B(u,u,G_{uw}^{\neg R})$,
            \item $2 \cdot \omega(u,w) + \dst_B(w,w,G_{vw}^{\neg R})$,
            \item $2 \cdot \omega(u,v) + \dst_B(v,v,G_{vw}^{\neg R})$.
            \end{enumerate}
            The value of $\dst_B(v,v,G_{uv}^{\neg R})$ is obtained by 
            swapping $u$ and $v$ in i., ii., and iii. 
      \end{enumerate}
\item Assume that $(u,v)$ is not an edge of the face represented by the 
      root of $D(G)$. Let $w$ be the third vertex of the face of 
      $G_{uv}^R$ 
      that contains $(u,v)$ as an edge. We may assume without loss of 
      generality that $(v,w)$ is an edge of the face represented by the 
      parent of the face representing $(u,v,w)$. 
      All possible cases are illustrated in Figure~\ref{fig:pre}.

\begin{figure}[h]
  \centering
  \includegraphics[scale = 0.5]{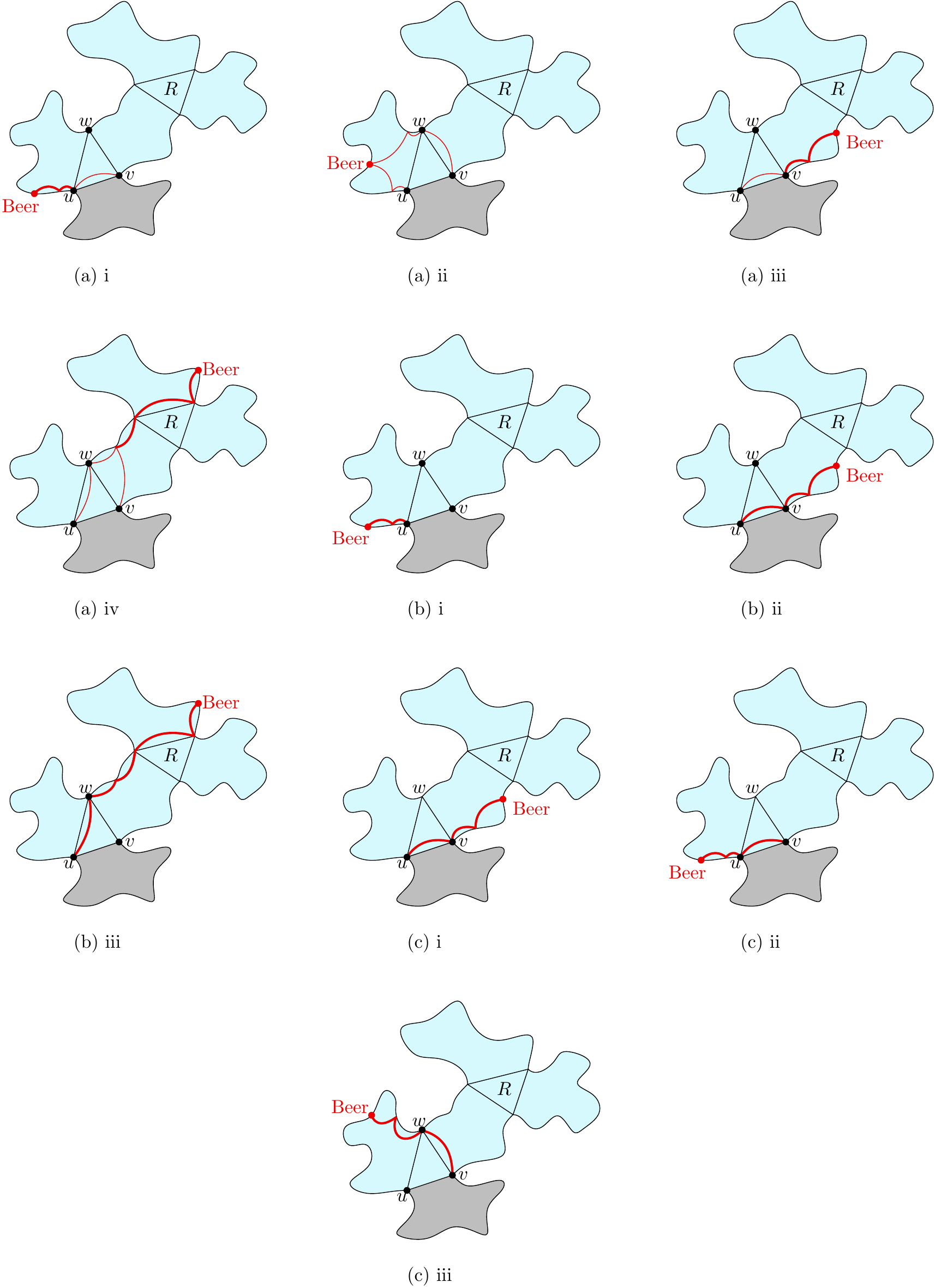}
  \caption{Illustrating the pre-order traversal for all cases in 
           item 2.}
  \label{fig:pre}
\end{figure}
      
      \begin{enumerate}
      \item The value of $\dst_B(u,v,G_{uv}^R)$ is the minimum of 
            \begin{enumerate}
            \item $\dst_B(u,u,G_{uw}^{\neg R}) + \omega(u,v)$,
            \item $\dst_B(u,w,G_{uw}^{\neg R}) + \omega(w,v)$, 
            \item $\omega(u,v) + \dst_B(v,v,G_{vw}^R)$, 
            \item $\omega(u,w) + \dst_B(w,v,G_{vw}^R)$. 
            \end{enumerate}
      \item The value of $\dst_B(u,u,G_{uv}^R)$ is the minimum of 
            \begin{enumerate}
            \item $\dst_B(u,u,G_{uw}^{\neg R})$,
            \item $2 \cdot \omega(u,v) + \dst_B(v,v,G_{vw}^R)$,
            \item $2 \cdot \omega(u,w) + \dst_B(w,w,G_{vw}^R)$.
            \end{enumerate}
      \item The value of $\dst_B(v,v,G_{uv}^R)$ is the minimum of 
            \begin{enumerate}
            \item $\dst_B(v,v,G_{vw}^R)$,
            \item $2 \cdot \omega(v,u) + \dst_B(u,u,G_{uw}^R)$,
            \item $2 \cdot \omega(v,w) + \dst_B(w,w,G_{uw}^R)$.
            \end{enumerate}
      \end{enumerate}
\end{enumerate} 

These recurrences express $\dst_B(u,v,G_{uv}^R)$, $\dst_B(u,u,G_{uv}^{R})$, 
and $\dst_B(v,v,G_{uv}^R)$ in terms of values that are ``higher up''
in the tree $D(G)$ and values that involve graphs with a superscript 
``$\neg R$''. These latter values have been computed already. 
Thus, by performing a pre-order traversal of $D(G)$, we obtain all values
$\dst_B(u,v,G_{uv}^R)$, $\dst_B(u,u,G_{uv}^R)$, and 
$\dst_B(v,v,G_{uv}^R)$, for all edges $(u,v)$ of $G$, in $O(n)$ total 
time. This completes the proof of the first claim in 
Lemma~\ref{lem-crazy}.

Consider $\dst(u,v)$ where $u=v$ or $(u,v)$ is an edge of $G$. If $u$ or $v$ is a beer store, then store $nil$ with $\dst(u,v)$. 

The values $\dst(u,v,G_{uv}^R)$ and $\dst(u,v,G_{uv}^{\neg R})$ are computed as the minimum of a set of path weights from $u$ to $v$ through a vertex $x$ such that $x$ is adjacent to both $u$ and $v$ or $x$ is equal to one of these vertices and adjacent to the other. Either the subpath from $u$ to $x$ is a beer path or the subpath from $x$ to $v$ is a beer path. Whenever we take the minimum of a set of path weights in the above computation, we store with that distance the vertex $x$ and a bit to indicate which subpath is the beer path.  When $u=v$ we can arbitrarily choose which subpath is the beer path. After taking the minimum of $\dst(u,v,G_{uv}^R)$ and $\dst(u,v,G_{uv}^{\neg R})$, we are left with a vertex, $x$, on the shortest beer path from $u$ to $v$ and the bit indicating which subpath is a beer path.

We recursively compute $\spp_B(u,v)$ where either $(u,v)$ is an edge of $G$ or $u=v$ as follows.
\begin{enumerate}
\item If $nil$ is stored with $\dst_B(u,v)$ and $u=v$, $\spp_B(u,v) = (u)$.
\item If $nil$ is stored with $\dst_B(u,v)$ and $(u,v)$ is an edge then $\spp_B(u,v) = (u,v)$.
\item If a vertex $x$ is stored with $(u,v)$ and the subpath from $u$ to $x$ is a beer path then recursively compute $\dst_B(u,x)$. $\spp_B(u,v)=\spp_B(u,x)\cup (v)$.
\item Otherwise $x$ is stored with $(u,v)$ and the subpath from $x$ to $v$ is a beer path. Recursively compute $\dst_B(x,v)$. $\spp_B(u,v) = (u) \cup \spp_B(x,v)$.
\end{enumerate}

Note that a constant amount of work is done at each level of the recurrence excluding the time spent in recursive calls. In each recursive call, except potentially the last call, we get one new vertex on the shortest beer path. Thus, constructing the whole path requires a total of $O(L)$ time.

\section{Extension to Arbitrary Outerplanar Graphs}
\label{app:extension} 


\subsection{Maximal Outerplanar}
Let $G$ be an outerplanar beer graph with $n$ vertices. 
Assume that the outer face of $G$ is not a Hamiltonian cycle. 
We traverse $G$ along the outer face in a clockwise manner, and mark 
each vertex when we encounter it for the first time. Each time we visit 
a marked vertex $v$, we take note of $v$'s current counterclockwise 
neighbor, $\ccw(v)$. Then we continue from $v$ to the next clockwise vertex on the 
outer face and add an edge from this vertex to $\ccw(v)$. We continue this 
process until we have returned to the vertex we started from and 
all vertices have been marked.

At this moment, the outer face is a Hamiltonian cycle. 
For every interior face that is not a triangle, we pick a vertex $u$ 
on that face and add edges connecting $u$ with all vertices of the 
face that are not already adjacent to $u$. 

The resulting graph is a maximal outerplanar graph. Each edge that has 
been added is given a weight of infinity. Observe that each shortest 
(beer) path in the resulting graph corresponds to a shortest (beer) 
path in the original graph, and vice versa. 

\subsection{Generalized Triangle Inequality}
Let $G$ be a maximal outerplanar graph with an edge weight function 
$\omega$. In order to convert $G$ to a graph that satisfies the 
generalized triangle inequality, we need to compute $\dst(u,v)$ for 
every edge $(u,v)$ in $G$, and we need to be able to construct 
$\spp(u,v)$ for each edge $(u,v)$. 

Let $D(G)$ be the dual of $G$ rooted at an arbitrary interior face of 
$G$. For each edge $(u,v)$, we initialize $\delta(u,v) = \omega(u,v)$.  
(At the end, $\delta(u,v)$ will be equal to $\dst(u,v)$.) 
For each edge $(u,v)$, we also maintain a parent vertex, $p(u,v)$, 
initialized to $nil$. 

We first conduct a post-order traversal of $D(G)$, processing each 
associated face in $G$. Then we conduct a pre-order traversal of $D(G)$,
again processing each associated face in $G$. Let $F$ be a face of $G$. 
We process $F$ as follows. Let $(u,v)$ be the edge of $F$ that is shared 
with the predecessor face $F'$ in the traversal, and let $w$ be the 
third vertex of $F'$. If $\delta(u,v) > \delta(u,w) + \delta(w,v)$, 
we set $\delta(u,v) = \delta(u,w) + \delta(w,v)$ and $p(u,v)=w$. 

After these traversals, $\delta(u,v) = \dst(u,v)$ for every edge 
$(u,v)$ in $G$. If $p(u,v) = nil$, then $\spp(u,v) = (u,v)$; otherwise, 
$\spp(u,v)$ is the concatenation of $\spp(u,p(u,v))$ and 
$\spp(p(u,v),v)$, both of which can be computed recursively.

\section{Single Source Shortest Beer Path}
\label{secSSSBP} 
In this section we will describe how to compute the single source shortest beer path from a source vertex $s$ on a maximal outerplanar graph $G$ that satisfies the generalized triangle inequality. In order to do this we first precompute (i) $\dst_B(u,v)$ for every edge $(u,v)$ and $\dst_B(u,u)$ for every vertex $u$ and (ii) $\dst(s,v)$ for every vertex $v$. By Lemma \ref{lem-crazy}, we can compute (i) in $O(n)$ time and in \cite{MaheshwariAndZeh}, Maheshwari and Zeh present a single source shortest path algorithm for undirected outerplanar graphs which gives us (ii) in $O(n)$ time. 

Let $D(G)$ be the dual of $G$ and let $F_s$ be an arbitrary interior face of $G$ containing $s$. Root $D(G)$ at the node $F_s$ and then conduct a pre-order traversal of $D(G)$. Let $F$ be the current node of $D(G)$ being processed during
this traversal. 

\begin{enumerate}
\item If $F=F_s$, let $u$ and $v$ be the vertices of $F_s$ that are not $s$. Since $(s,u)$ and $(s,v)$ are both edges of $G$, $\dst_B(s,s)$, $\dst_B(s,u)$ and $\dst_B(s,v)$ were precomputed in (i). 
\item If $F \neq F_s$, let $a$ and $b$ be the vertices of $F$ shared 
with the face $F'$, where $F'$ is the parent of $F$ in $D(G)$. This implies that by this step we have already computed $\dst_B(s,a)$ and $\dst_B(s,b)$. Let $c$ be the third vertex of $F$. The value of $\dst_B(s,c)$ is the minimum of:
\begin{enumerate}
\item $\dst(s,a)+\dst_B(a,c)$,
\item $\dst_B(s,a)+\omega(a,c)$,
\item $\dst(s,b)+\dst_B(b,c)$,
\item $\dst_B(s,b)+\omega(b,c)$.
\end{enumerate}
Since $(a,c)$ and $(b,c)$ are edges of $G$ we precomputed $\dst_B(a,c)$ and $\dst_B(b,c)$ in (i). Lastly, we computed $\dst(s,a)$ and $\dst(s,b)$ in (ii) so each of the values listed above can be computed in constant time.
\end{enumerate}

The correctness of this algorithm follows from 
Observation~\ref{oneVertexOnPath} and the generalized triangle 
inequality. Since we do a constant amount of work at each face in 
the traversal of $D(G)$ and the number of interior faces of a 
maximal outerplanar graph is $n-2$, this algorithm takes 
$O(n)$ time.

Let $L$ the the number of vertices on the shortest beer path from 
$s$ to $v$. If $\dst_B(s,v)$ was found in step~1, then by 
Lemma~\ref{lem-crazy}, $\spp_B(s,v)$ can be constructed in $O(L)$ 
time. If this is not the case, then $v=c$ in some iteration of step~2. 
At this step we store a vertex $p(v)$ such that $p(v)=a$ if (a) 
or (b) was the minimum of step~2 and $p(v)=b$ otherwise. We also store a bit to indicate if the subpath from $s$ to $p(v)$ is the shortest path (as in cases (a) and (c)) or the shortest beer path (as in cases (b) and (d)). If the subpath from $s$ to $p(v)$ is the shortest path, we use the method described in \cite{MaheshwariAndZeh} to find $\spp(s,p(v))$ and use Lemma \ref{lem-crazy} to find $\spp_B(p(v),v)$ and then concatenate $\spp(s,p(v))$ and  $\spp_B(p(v),v)$ to get $\spp_B(s,v)$. Both $\spp(s,p(v))$ and $\spp_B(p(v),v)$ are found in time proportional to the number of vertices on their paths, so this takes $O(L)$ time. If the subpath from $s$ to $p(v)$ is the shortest beer path, then we recursively find $\spp_B(s,p(v))$ and concatenate it with the edge $(p(v),v)$ (which is $\spp(p(v),v)$ by the generalized triangle inequality).  Each iteration of the recursive step takes time proportional to the number of new vertices of the path found in that step. Thus, we find  $\spp_B(s,v)$ in a total of $O(L)$ time.

\bibliographystyle{plain}
\bibliography{FinalVersion}

\begin{thebibliography}{1}

\bibitem{AlonOptimalPreprocessing}
N.~Alon and B.~Schieber.
\newblock Optimal preprocessing for answering on-line product queries.
\newblock Technical Report 71/87, Tel-Aviv University, 1987.

\bibitem{bf-lpr-00}
M.~A. Bender and M.~Farach-Colton.
\newblock The {LCA} problem revisited.
\newblock In {\em Proceedings of the 4th Latin American Symposium on
  Theoretical Informatics}, volume 1776 of {\em Lecture Notes in Computer
  Science}, pages 88--94, Berlin, 2000. Springer-Verlag.

\bibitem{timothy}
T.~M. Chan, M.~He, J.~I. Munro, and G.~Zhou.
\newblock Succinct indices for path minimum, with applications.
\newblock {\em Algorithmica}, 78(2):453--491, 2017.

\bibitem{ChazelleFreeTreeComputing}
B.~Chazelle.
\newblock Computing on a free tree via complexity-preserving mappings.
\newblock {\em Algorithmica}, 2:337--361, 1987.

\bibitem{DPZ91}
H.~Djidjev, G.~E. Pantziou, and C.~D. Zaroliagis.
\newblock Computing shortest paths and distances in planar graphs.
\newblock In {\em Automata, Languages and Programming, 18th International
  Colloquium, ICALP91}, volume 510 of {\em Lecture Notes in Computer Science},
  pages 327--338. Springer, 1991.

\bibitem{DBLP:journals/siamcomp/HarelT84}
D.~Harel and R.~E. Tarjan.
\newblock Fast algorithms for finding nearest common ancestors.
\newblock {\em {SIAM} J. Comput.}, 13(2):338--355, 1984.

\bibitem{MaheshwariAndZeh}
A.~Maheshwari and N.~Zeh.
\newblock I/o-optimal algorithms for outerplanar graphs.
\newblock {\em Journal of Graph Algorithms and Applications}, 8(1):47--87,
  2004.

\bibitem{PettieInverseAckermann}
S.~Pettie.
\newblock An inverse-{Ackermann} type lower bound for online minimum spanning
  tree verification.
\newblock {\em Combinatorica}, 26(2):207--230, 2006.

\bibitem{ThorupParallelShortcutting}
M.~Thorup.
\newblock Parallel shortcutting of rooted trees.
\newblock {\em Journal of Algorithms}, 32:139--159, 1997.

\end{thebibliography}

\end{document}